\newif\ifpublic
\newif\ifarxiv
\def\showauthornotes{1}
\def\showdraftbox{1}
\definecolor{mycolor}{rgb}{0.122, 0.435, 0.698}
\newcommand{\mybox}[1]{%
  \setbox0=\hbox{#1}%
  \setlength{\@tempdima}{\dimexpr\wd0+13pt}%
  \begin{tcolorbox}[colframe=mycolor,boxrule=0.5pt,arc=4pt,
      left=6pt,right=6pt,top=6pt,bottom=6pt,boxsep=0pt,width=\@tempdima]
    #1
  \end{tcolorbox}
}
\renewcommand{\eqref}[1]{\hyperref[#1]{(\ref*{#1})}}
\numberwithin{equation}{section}
\newtheorem{theorem}{Theorem}[section]
\newtheorem{definition}[theorem]{Definition}
\newtheorem{lemma}[theorem]{Lemma}
\newtheorem{remark}[theorem]{Remark}
\newtheorem{proposition}[theorem]{Proposition}
\newtheorem{fact}[theorem]{Fact}
\def\FullBox{\hbox{\vrule width 6pt height 6pt depth 0pt}}
\def\qed{\ifmmode\qquad\FullBox\else{\unskip\nobreak\hfil
\penalty50\hskip1em\null\nobreak\hfil\FullBox
\parfillskip=0pt\finalhyphendemerits=0\endgraf}\fi}
\def\qedsketch{\ifmmode\Box\else{\unskip\nobreak\hfil
\penalty50\hskip1em\null\nobreak\hfil$\Box$
\parfillskip=0pt\finalhyphendemerits=0\endgraf}\fi}
\newenvironment{proof}{\begin{trivlist} \item {\bf Proof:~~}}
   {\qed\end{trivlist}}
\newcommand\R{\mathbb R}
\newcommand{\marginlabel}[1]%
{\mbox{}\marginpar{\it{\raggedleft\hspace{0pt}#1}}}
\definecolor{Mygray}{gray}{0.8}
\let\csname ifcommentflag\expandafter\endcsname
\newcommand{\Authornote}[2]{{\sf\small\color{red}{[#1: #2]}}}
\newcommand{\Authoredit}[2]{{\sf\small\color{red}{[#1]}\color{blue}{#2}}}
\newcommand{\Authorcomment}[2]{{\sf \small\color{Mygray}{[#1: #2]}}}
\newcommand{\Authorfnote}[2]{\footnote{\color{red}{#1: #2}}}
\newcommand{\Authorfixme}[1]{\Authornote{#1}{\textbf{??}}}
\newcommand{\Authormarginmark}[1]{\marginpar{\textcolor{red}{\fbox{
#1:!}}}}
\newcommand{\Authornote}[2]{}
\newcommand{\Authoredit}[2]{}
\newcommand{\Authorcomment}[2]{}
\newcommand{\Authorfnote}[2]{}
\newcommand{\Authorfixme}[1]{}
\newcommand{\Authormarginmark}[1]{}
\newcommand\calG{\mathcal{G}}
\newcommand\calH{\mathcal{H}}
\newcommand\calD{\mathcal{D}}
\def\implies{\Rightarrow}
\newlength{\pgmtab}  
\newtheorem{Thm}{Theorem}[section]
\newtheorem{observation}[Thm]{Observation}
\newcounter{lecnum}
\newlength{\tpush}
\newcommand{\lref}[2][]{\hyperref[#2]{#1~\ref*{#2}}}
\renewcommand{\eqref}[1]{\hyperref[#1]{(\ref*{#1})}}
\numberwithin{equation}{section}
\newcommand{\Var}{{\sf Var}}
\newcommand{\err}{\mathsf{err}}
\newcommand{\Inf}{\mathsf{Inf}}
\newcommand{\term}{\mathsf{Term}}
\newcommand{\calX}{\mathcal{X}}
\newcommand{\calT}{\mathcal{T}}
\newcommand{\calN}{\mathcal{N}}
\DeclareMathOperator*{\E}{\mathbf{E}}
\renewcommand{\R}{\mathbf{R}}
\renewcommand{\P}{\mathcal{P}}
\newcommand{\F}{\mathbf{F}}
\renewcommand{\E}{\mathop{\mathbf{E}}}
\newcommand{\cald}{\mathcal{D}}
\renewcommand{\inf}{\mathtt{inf}}
\newcommand{\X}{\mathcal{X}}
\renewcommand{\v}[1]{\bm{#1}}
\title{An Improved Dictatorship Test with Perfect Completeness}
\author{Amey Bhangale
\thanks{Department of Computer Science, Rutgers University, USA }
\and 
Subhash Khot\thanks{Computer Science Department, New York University, New York, USA}
\and 
Devanathan Thiruvenkatachari\thanks{Computer Science Department, New York University, New York, USA}
}
\begin{document}

\maketitle
\newcommand{\gapLC}{\mathtt{GapLC}}
\newcommand{\gapUG}{\mathtt{GapUG}}

\begin{abstract}
A Boolean function $f:\{0,1\}^n\rightarrow \{0,1\}$ is called a dictator if it depends on exactly one variable i.e $f(x_1, x_2, \ldots, x_n) = x_i$ for some $i\in [n]$. In this work, we study a $k$-query dictatorship test. Dictatorship tests are central in proving many hardness results for constraint satisfaction problems. 

The dictatorship test is said to have {\em perfect completeness} if it accepts any dictator function. The {\em soundness} of a test is the maximum probability with which it accepts any function far from a dictator. Our main result is a $k$-query dictatorship test with perfect completeness and soundness $ \frac{2k + 1}{2^k}$, where $k$ is of the form $2^t -1$ for any integer $t > 2$. This improves upon the result of \cite{TY15} which gave a dictatorship test with soundness $ \frac{2k + 3}{2^k}$.
\end{abstract}

\section{Introduction}
Boolean functions are the most basic objects in the field of theoretical computer science. Studying different properties of Boolean functions has found applications in many areas including hardness of approximation, communication complexity, circuit complexity etc. In this paper, we are interested in studying Boolean functions from a property testing point of view. 

In {\em property testing}, one has given access to a function $f :\{0, 1\}^n \rightarrow \{0, 1\}$ and the task is to decide if a given function has a particular property or whether it is {\em far} from it. One natural notion of farness is what fraction of $f$'s output we need to change so that the modified function has the required property. A verifier can have an access to random bits. This task of property testing seems trivial if we do not have restrictions on how many queries one can make and also on the computation. One of the main questions in this area is can we still decide if $f$ is very far from having the property by looking at a very few locations with high probability. 

There are few different parameters which are of interests while designing such tests including the amount of randomness, the number of locations queried, the amount of computation the verifier is allowed to do etc. The test can either be {\em adaptive} or {\em non-adaptive}. In an adaptive test, the verifier is allowed to query a function at a few locations and based on the answers that it gets, the verifier can decide the next locations to query whereas a non-adaptive verifier queries the function in one shot and once the answers are received makes a decision whether the function has the given property. In terms of how good the prediction is we want the test to satisfy the following two properties:
\begin{itemize}
\item {\bf Completeness:} If a given function has the property then the test should accept with high probability
\item {\bf Soundness:} If the function is far from the property then the test should accept with very tiny probability.
\end{itemize}
A test is said to have {\em perfect completeness} if in the completeness case the test always accepts. A test with {\em imperfect completeness} (or almost perfect completeness) accepts a dictator function with probability arbitrarily close to $1$. Let us define the soundness parameter of the test as how small we can make the acceptance probability in the soundness case. \\

A function is called a {\em dictator} if it depends on exactly one variable i.e $f(x_1, x_2, \ldots, x_n) = x_i$ for some $i\in [n]$. In this work, we are interested in a non-adaptive test with perfect completeness which decides whether a given function is a dictator or far from it.  This was first studied in \cite{BellareGS98, Parnas02} under the name of Dictatorship test and Long Code test. Apart from a natural property, dictatorship test has been used extensively in the construction of probabilistically checkable proofs (PCPs) and hardness of approximation. 

An instance of a {\em Label Cover} is a bipartite graph $G((A,B),E)$ where each edge $e\in E$ is labeled by a projection constraint $\pi_e : [L]\rightarrow [R]$. The goal is to assign labels from $[L]$ and $[R]$ to vertices in $A$ and $B$ respectivels so that the number of edge constraints satisfied is maximized. Let $\gapLC(1, \epsilon)$ is a promise gap problem where the task is to distinguish between the case when all the edges can be satisfied and at most $\epsilon$ fraction of edges are satisfied by any assignment. As a consequence of the PCP Theorem \cite{AroraLMSS1998, AroraS1998} and the Parallel Repetition Theorem\cite{Raz1998}, $\gapLC(1, \epsilon)$ is NP-hard for any constant $\epsilon>0$.   In \cite{Hastad2001}, H{\aa}stad used various dictatorship tests along with the hardness of Label Cover to prove optimal inapproximability results for many constraint satisfaction problems. Since then dictatorship test has been central in proving hardness of approximation.

A dictatorship test with $k$ queries and $P$ as an accepting predicate is usually useful in  showing hardness of approximating Max-$P$ problem.  Although this is true for many CSPs, there is no black-box reduction from such dictatorship test to getting inapproximability result. One of the main obstacles in converting dictatorship test to NP-hardness result is that the constraints in Label Cover are $d$-to-$1$ where the the  parameter $d$ depends on $\epsilon$ in $\gapLC(1, \epsilon)$. To remedy this, Khot in \cite{Khot02UGC} conjectured that a Label Cover where the constraints are $1$-to-$1$, called {\em Unique Games}, is also hard to approximate within any constant.  More specifically, Khot conjectured that $\gapUG(1-\epsilon, \epsilon)$, an analogous promise problem for Unique Games,  is NP-hard for any constant $\epsilon>0$.  One of the significance of this conjecture is that many dictatorship tests can be composed easily with  $\gapUG(1-\epsilon, \epsilon)$ to get  inapproximability results. However, since the Unique Games problem lacks perfect completeness it cannot be used to show hardness of approximating {\em satisfying} instances.

From the PCP point of view, in order to get $k$-bit PCP with perfect completeness, the first step is to analyze $k$-query dictatorship test with perfect completeness. For its application to construction PCPs there are two important things we need to study about the dictatorship test. First one is how to compose the dictatorship test with the known PCPs and second is how sound we can make the dictatorship test. In this work, we make a progress in understanding the answer to the later question. To make a remark on the first question, there is a dictatorship test with perfect completeness and soundness $\frac{2^{\tilde{O}(k^{1/3})}}{2^k}$ and also a way to compose it with $\gapLC(1, \epsilon)$ to get a $k$-bit PCP with perfect completeness and the same soundness that of the dictatorship test. This was done in \cite{Huang13} and is currently the best know $k$-bit non-adaptive PCP with perfect completeness.

\paragraph{Distance from a dictator function:} There are multiple notion of closeness to a dictator function. One natural definition is the minimum fraction of values we need to change such that the function becomes a dictator. There are other relaxed notions such as how close the function is to {\em juntas} - functions that depend on constantly many variables. Since our main motivation is the use of dictatorship test in the construction of PCP, we can work with even more relaxed notion which we describe next: For a Boolean function $f:\{0,1\}^n \rightarrow \{0,1\}$ an influence of $i^{th}$ variable is the probability that for a random input $x\in \{0,1\}^n$ flipping the $i^{th}$ coordinate flips the value of the function. Note that a dictator function has a variable whose influence is $1$. The influence of $i^{th}$ variable can be expressed in terms of the fourier coefficients of $f$ as $\inf_i[f] = \sum_{S\subseteq [n] \mid i\in S}\hat{f}(S)^2$. Using this, a degree $d$ influence of $f$ is $\inf_i^{\leq d}[f] = \sum_{S \subseteq [n] \mid i\in S, |S|\leq d}\hat{f}(S)^2$. We say that $f$ is far from any dictator if for a constant $d$ all its degree $d$ influences are upper bounded by some small constant.\\

In this paper, we investigate the trade-off between the number of queries and the soundness parameter of a dictatorship test with perfect completeness w.r.t to the above defined distance to a dictator function.  A random function is far from any dictator but still it passes any (non-trivial) $k$-query test with probability at least $1/2^k$. Thus, we cannot expect the test to have soundness parameter less than $1/2^k$. The main theorem in this paper is to show there exists a dictatorship test with perfect completeness and soundness at most $\frac{2k+1}{2^k}$.

\begin{theorem}
Given a Boolean function $f : \{0, 1\}^n \rightarrow \{0, 1\}$, for every $k$ of the form $2^m-1$ for any $m>2$, there is a $k$ query dictatorship test with perfect completeness and soundness $\frac{2k+1}{2^k}$.
\end{theorem}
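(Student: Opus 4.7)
The plan is to construct the test so that its $k = 2^m - 1$ queries are indexed by the nonzero elements of $\F_2^m$. The verifier samples $m$ independent uniform strings $x^{(1)},\ldots,x^{(m)} \in \{0,1\}^n$ together with a small auxiliary noise, and for each $v \in \F_2^m \setminus \{0\}$ queries $f$ at a point $y^v$ obtained from $\bigoplus_{i : v_i = 1} x^{(i)}$ after a structured perturbation governed by the noise. The verifier then evaluates an accepting predicate $P : \{0,1\}^k \to \{0,1\}$ chosen so that $|P^{-1}(1)| = 2k+1$; since a uniformly random function passes the test with probability $|P^{-1}(1)|/2^k$, this is the target soundness, and all of the work is to show that every function with small low-degree influences passes with essentially this probability.

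The predicate is built around the Hadamard codebook $\calH = \{(\langle u, v\rangle)_{v \ne 0} : u \in \F_2^m\}$, which already has $k+1$ elements: these are precisely the responses of a dictator $f(z) = z_j$ when the auxiliary noise is trivial, since in that case $f(y^v) = \langle u, v\rangle$ with $u = (x^{(1)}_j,\ldots,x^{(m)}_j)$. To reach $|P^{-1}(1)| = 2k+1$ without breaking perfect completeness, I would add $k$ further accepting strings, each obtained by a controlled single-coordinate flip of a designated codeword in $\calH$, and choose the distribution of the noise so that a dictator whose noise is nontrivially activated always lands in exactly one of these added strings. Perfect completeness then follows by a direct case analysis on the noise and on $u$.

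For soundness, the key step is to expand in the $\pm 1$ Fourier basis of $\{0,1\}^k$ and write $\Pr[\text{accept}] = \hat P(\emptyset) + \sum_{T \ne \emptyset} \hat P(T) \cdot \E\!\left[\prod_{v \in T} (-1)^{f(y^v)}\right]$, where the expectation runs over the seeds and the noise. The $T = \emptyset$ contribution is exactly $(2k+1)/2^k$. For each $T \ne \emptyset$, expanding each $f(y^v)$ in its own Fourier series and taking expectation over the independent seeds $x^{(i)}$ forces the Fourier supports $\{S_v\}_{v \in T}$ to satisfy a linear consistency condition coming from the $\F_2^m$-structure of $T$. The standard small-influence reduction, combined with Cauchy--Schwarz and an invariance-principle argument to handle the joint distribution of the noise and the seeds, then shows that any function all of whose degree-$d$ influences are at most $\tau$ contributes a total of $o_\tau(1)$ across $T \ne \emptyset$.

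The main obstacle is the joint design of the predicate $P$ and the noise distribution, which must simultaneously achieve (i) the strict count $|P^{-1}(1)| = 2k+1$, saving two strings over \cite{TY15}; (ii) perfect completeness against every dictator for every value of the noise; and (iii) Fourier coefficients $\hat P(T)$ whose contributions in the sum above can be absorbed by the small-influence assumption. Saving the two accepting strings relative to \cite{TY15} corresponds to a more economical use of the noise ``budget'', and the technical heart of the proof is verifying that this tighter predicate remains compatible with all dictator responses while its Fourier spectrum still admits the required decay.
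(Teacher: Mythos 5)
Your proposal has the correct high-level skeleton — the Hadamard predicate $H_k$ together with $0^k$ and the $k$ unit vectors $e_i$ to reach $|\P^{-1}(1)| = 2k+1$, a test distribution supported coordinate-wise on $\P^{-1}(1)$ to get perfect completeness, and a soundness analysis that expands $\P$ in Fourier and bounds $\big|\E[\prod_{v \in T} f(y^v)]\big|$ for $T \ne \emptyset$. (The paper sets this up by directly specifying a distribution $\calD_{k,\epsilon}$ on $\P^{-1}(1)$ rather than via seeds $x^{(1)},\dots,x^{(m)}$, but that is just a presentational difference.) However, the proposal glosses over precisely the obstruction that is the actual content of the paper.

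The difficulty is this: once you put any probability mass on the $e_i$'s, the coordinate distribution on $\{0,1\}^k$ is \emph{no longer pairwise independent} (in the paper, $\mathtt{Cov}[x_i,x_j] = -\epsilon/2(1-\alpha) \ne 0$). The ``standard small-influence reduction $+$ Cauchy--Schwarz $+$ invariance principle'' you invoke is exactly the argument that requires pairwise independence: it is pairwise independence that lets the invariance step replace the test distribution by jointly Gaussian variables that are then \emph{fully independent}, so that $\E[\prod_{i\in S}f(\v g_i)] = \prod_{i\in S}\E[f(\v g_i)] = 0$ for a folded $f$. This is how \cite{TY15} get $2k+3$: their distribution is connected \emph{and} pairwise independent. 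To save the two extra strings you cannot have pairwise independence, so after invariance you get \emph{correlated} Gaussians and the product of expectations no longer factorizes. Your proposal never says how this is handled; ``a more economical use of the noise budget'' does not name the mechanism.

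The paper's resolution has two new ingredients you would need. First, a perturbation bound (the paper's Lemma~\ref{lemma:indep_gaussian}, via Lemma~\ref{lemma:gclose}) showing that a degree-$D$ polynomial of $\ell_2$ norm $\le 1$ changes by at most roughly $\delta\cdot(2k)^{2kD}$ when you replace $\delta$-correlated Gaussians by independent ones. Second — and this is the part your sketch cannot reach at all — that bound is only useful if $D$ is small, yet $D$ must be large enough that $f^{>D}$ has small Fourier mass after the noise $T_{1-\gamma}$ is applied; and the amount of noise $\gamma$ you can afford is itself limited by the connectivity parameter $\epsilon$. These constraints conflict for a single choice of $\epsilon$: there is an interval of Fourier levels $(s,S)$ in which $f$ could carry mass that is neither killed by the noise nor cheap for the Gaussian-decorrelation step. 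The paper escapes by running the test with $\delta = \epsilon_j$ for a uniformly random $j \in [r]$, with $\epsilon_j$'s chosen so the ``bad'' intervals $(s_j,S_j)$ are pairwise disjoint; since $\sum_T \hat f(T)^2 \le 1$, a random $j$ sees only $O(1/r)$ mass in its bad interval. Without some version of this averaging-over-distributions device (or another way to break the $\gamma$-vs-$D$ tension), the soundness argument does not close, so the proposal as written has a genuine gap.
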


Our theorem improves a result of Tamaki-Yoshida\cite{TY15} which had a soundness of $\frac{2k+3}{2^k}$.

\begin{remark}
Tamaki-Yoshida~\cite{TY15} studied a $k$ functions test where if a given set of $k$ functions are all the same dictator then the test accepts with probability $1$. They use low degree {\em cross influence} (Definition 2.4 in ~\cite{TY15}) as a criteria to decide closeness to a dictator function. Our whole analysis also goes through under the same setting as that of~\cite{TY15}, but we stick to single function version for a cleaner presentation.
\end{remark}

\subsection{Previous Work}
The notion of Dictatorship Test was introduced by Bellare et al.~\cite{BellareGS98} in the context of Probabilistically Checkable Proofs and also studied by Parnas et al. ~\cite{Parnas02}. As our focus is on non-adpative test, for an adaptive $k$-bit dictatorship test, we refer interested readers to ~\cite{ST09, HW03, HK05, EH08}. Throughout this section, we use $k$ to denote the number of queries and $\epsilon>0$ an arbitrary small constant.

Getting the soundness parameter for a specific values of $k$ had been studied earlier. For instance, for $k=3$ H{\aa}stad~\cite{Hastad2001} gave a $3$-bit PCP with completeness $1-\epsilon$ and soundness $1/2+\epsilon$. It was earlier shown by Zwick~\cite{Zwick97} that any $3$-bit dictator test with perfect completeness must have soundness at at least $5/8$. For a $3$-bit dictatorship test with perfect completeness,  Khot-Saket~\cite{KS06} acheived a soundness parameter $20/27$ and they were also able to compose their test with Label Cover towards getting $3$-bit PCP with similar completeness and soundness parameters. The dictatorship test of Khot-Saket~\cite{KS06} was later improved by O'Donnell-Wu~\cite{OW09} to the optimal value of $5/8$. The dictatorship test of O'Donnell-Wu~\cite{OW09} was used in O'Donnell-Wu~\cite{OW09conditional} to get a conditional (based on Khot's $d$-to-$1$ conjecture) $3$-bit PCP with perfect completeness and soundness $5/8$ which was later made unconditional by H{\aa}stad~\cite{H14np}. 

For a general $k$, Samorodensky-Trevisan~\cite{ST00PCP} constructed a $k$-bit PCP with imperfect completeness and soundness $2^{2\sqrt{k}}/2^k$.  This was improved later by Engebretsen and Holmerin~\cite{EH08} to $2^{\sqrt{2k}}/2^k$ and by H{\aa}stad-Khot~\cite{HK05} to $2^{4\sqrt{k}}/2^k$ with perfect completeness. To break the  $2^{O(\sqrt{k})}/2^k$ Samorodensky-Trevisan~\cite{ST09} introduced the relaxed notion of soundness (based on the low degree influences) and gave a dictatorship test (called Hypergraph dictatorship test) with almost perfect completeness and soundness $2k/2^k$ for every $k$ and also $(k+1)/2^k$ for infinitely many $k$. They combined this test with Khot's Unique Games Conjecture~\cite{Khot02UGC} to get a conditional $k$-bit PCP with similar completeness and soundness guarantees. This result was improved by Austrin-Mossel~\cite{AustrinM2009} and they achieved $k+o(k)/2^k$ soundness.

For any $k$-bit CSP for which there is an instance with an integrality gap of $c/s$ for a certain SDP, using a result of Raghavendra~\cite{R08} one can get a dictatorship test with completeness $c-\epsilon$ and soundness $s+\epsilon$. Getting the explicit values of $c$ and $s$ for a given value of $k$ is not clear from this result and also it cannot be used to get a dictatorship test with perfect completeness. Similarly, using the characterization of strong approximation restance of Khot et. al~\cite{KTW14} one can get a dictatorship test but it also lacks peferct completeness. Recently, Chan~\cite{Chan13} significantly improved the parameters for a $k$-bit PCP which achieves soundness $2k/2^k$ albeit losing perfect completeness. Later Huang~\cite{Huang13} gave a $k$-bit PCP with perfect completeness and soundness $2^{\tilde{O}(k^{1/3})}/{2^k}$. 

As noted earlier, the previously best known result for a $k$-bit dictatorship test with perfect completeness is by Tamaki-Yoshida~\cite{TY15}. They gave a test with soundness $\frac{2k+3}{2^k}$ for infinitely many $k$.
\subsection{Proof Overview}

Let $f :\{-1,+1\}^n \rightarrow \{-1,+1\}$ be a given balanced Boolean function \footnote{Here we switch from $0/1$ to $+1/-1$ for convenience. With this notation switch, balanced function means $\E[f(\v x)] = 0$}. Any  non-adaptive $k$-query dictatorship test queries the function $f$ at $k$ locations and receives $k$ bits which are the function output on these queries inputs. The verifier then applies some predicate, let's call it $\P :\{0,1\}^k \rightarrow \{0,1\}$, to the received bits and based on the outcome decides whether the function is a dictator or far from it. Since we are interested in a test with perfect completeness this puts some restriction on the set of $k$ queried locations. If we denote $\v x_1, \v x_2, \ldots, \v x_k$ as the set of queried locations then the $i^{th}$ bit from $(\v x_1, \v x_2, \ldots, \v x_k)$ should satisfy the predicate $\P$. This is because, the test should always accept no matter which dictator $f$ is. 

Let $\mu$ denotes a distribution on $\P^{-1}(1)$. One natural way to sample $(\v x_1, \v x_2, \ldots, \v x_k)$ such that the test has a perfect completeness guarantee is for each coordinate $i\in [n]$ independently sample $(\v x_1, \v x_2, \ldots, \v x_k)_i$ from distribution $\mu$. This is what we do in our dictatorship test for a specific distribution $\mu$ supported on $\P^{-1}(1)$. It is now easy to see that the test accepts with probability $1$ of $f$ is an $i^{th}$ dictator for any $i\in [n]$.

Analyzing the soundness of a test is the main technical task. First note that the soundness parameter of the test depends on $\P^{-1}(1)$ as it can be easily verified that if $f$ is a random function, which is far from any dictator function, then the test accepts with probability at least $\frac{|\P^{-1}(1)|}{2^k}$. Thus, for a better soundness guarantee we want $P$ to have as small support as possible. The acceptance probability of the test is given by the following expression:
\begin{align*}
    \Pr[\mbox{Test accepts } f] &= \E [\P(f(\v{x}_1), f(\v{x}_2),\cdots, f(\v{x}_k))]\\
    & = \frac{|\P^{-1}(1)|}{2^k} + \E\left[\sum_{S\subseteq [k], S\neq \emptyset}\hat{\P}(S) \prod_{i\in S}f(\v{x}_i)\right]
\end{align*}
Thus, in order to show that the test accepts with probability at most $\frac{|\P^{-1}(1)|}{2^k} + \epsilon$ it is enough to show that all the expectations $E_S:=|\E[\prod_{i\in S}f(\v{x}_i) ]|$ are small if $f$ is far from any dictator function. Recall that at this point, we can have any predicate $\P$ on $k$ bits which the verifier uses. As we will see later, for the soundness analysis we need the predicate $\P$ to satisfy certain properties. 

For the rest of the section, assume that the given function $f$ is such that the low degree influence of every variable $i\in [n]$ is very small constant $\tau$. If $f$ is a constant degree function (independent of $n$) then the usual analysis goes by invoking invariance principle to claim that the quantity $E_S$ does not change by much if we replace the distribution $\mu$ to a distribution $\xi$ over Gaussian random variable with the same first and second moments.  An advantage of moving to a Gaussian distribution is that if $\mu$ was a uniform and pairwise independent distribution then so is $\xi$ and using the fact that a pairwise independence implies a total independence in the Gaussian setting, we have $E_S \approx |\prod_{i\in S}\E[f(\v{g}_i)]|$. Since we assumed that $f$ was a balanced function we have $\E[f(\v{g}_i)]| = 0$ and hence we can say that the quantity $E_S$ is very small. 

There are two main things we need to take care in the above argument. $1)$ We assumed that $f$ is a low degree function and in general it may not be true. $2)$ The argument crucially needed $\mu$ to satisfy pairwise independence condition and hence it puts some restriction on the size of $\P^{-1}(1)$ (Ideally, we would like $|\P^{-1}(1)|$ to be as small as possible for a better soundness guarantee). We take care of $(1)$, as in the previous works \cite{TY15, OW09, AustrinM2009} etc., by requiring the distribution $\mu$ to have {\em correlation} bounded away from $1$. This can be achieved by making sure the support of $\mu$ is {\em connected} - for every coordinate $i\in[k]$ there exists $a, b\in \P^{-1}(1)$ which differ at the $i^{th}$ location. For such distribution, we can add independent {\em noise} to each co-ordinate without changing the quantity $E_S$ by much. Adding independent noise has the effect that it damps the higher order fourier coefficients of $f$ and the function behaves as a low degree function. We can now apply invariance principle to claim that $E_S\approx 0$. This was the approach in \cite{TY15} and they could find a distribution $\mu$ whose support size is $2k+3$ which is connected and pairwise independent. 

In order to get an improvement in the soundness guarantee, our main technical contribution is that we can still get the overall soundness analysis to go through even if $\mu$ does not support pairwise independence condition. To this end, we start with a distribution $\mu$ whose support size is $2k+1$ and has the property that it is {\em almost} pairwise independent. Since we lack pairwise independence, it introduces few obstacles in the above mentioned analysis. First, the {\em amount} of noise we can add to each co-ordinate has some limitations. Second, because of the limited amount of independent noise, we can no longer say that the function $f$ behaves as a low degree function after adding the noise. With the limited amount of noise, we can say that $f$ behaves as a low degree function as long as it does not have a large fourier mass in some interval i.e the fourier mass corresponding to $\hat{f}(T)^2$ such that $|T|\in (s, S)$ for some constant sized interval $(s,S)$ independent of $n$. We handle this obstacle by designing a family of distributions $\mu_1, \mu_2, \ldots, \mu_r$ for large enough $r$ such that the intervals that we cannot handle for different $\mu_i$'s are disjoint. Also, each $\mu_i$ has the same support and is almost pairwise independent. We then let our final test distribution as first selecting $i\in [r]$ u.a.r and then doing the test with the corresponding distribution $\mu_i$. Since the total fourier mass of a $-1/+1$ function is bounded by $1$ and $f$ was fixed before running the test it is very unlikely that $f$ has a large fourier mass in the interval corresponding to the  selected distribution $\mu_i$. Hence, we can conclude that for this overall distribution, $f$ behaves as a low degree function. We note that this approach of using family of distributions was used in \cite{H14np} to construct a $3$-bit PCP with perfect completeness. There it was used in the composition step.

To finish the soundness analysis, let $\tilde{f}$ be the low degree part of $f$. The argument in the previous paragraph concludes that $E_S \approx  |\E[\prod_{i\in S}\tilde{f}(\v{x}_i)]|$. As in the previous work, we can now apply invariance principle to claim that $E_S \approx |\E[\prod_{i\in S}\tilde{f}(\v{g}_i)]|$ where the $i^{th}$ coordinate $(\v g_1, \v g_2, \ldots, \v g_k)_i$ is distributed according to $\xi$ which is almost pairwise independent. We can no longer bring the expectation inside as our distribution lacks independence. To our rescue, we have that the degree of $\tilde{f}$ is bounded by some constant independent of $n$. We then prove that low degree functions are robust w.r.t slight perturbation in the inputs on average. This lets us conclude $\E[\prod_{i\in S}\tilde{f}(\v{g}_i)] \approx \E[\prod_{i\in S}\tilde{f}(\v{h}_i)]$ where $(\v h_1, \v h_2, \ldots, \v h_k)_i$ is pairwise independent. We now  use the property of independence of Gaussian distribution and bring the expectation inside to conclude that $E_S \approx  |\E[\prod_{i\in S}\tilde{f}(\v{h}_i)] |= |\prod_{i\in S}\E[\tilde{f}(\v{h}_i)]| = 0$.

\section{Organization}
We start with some preliminaries in \lref[Section]{section:prelims}. In \lref[Section]{section:test} we describe our dictatorship test. Finally, in \lref[Section]{section:analysis} we prove the analysis of the described dictatorship test.
\section{Preliminaries}
\label{section:prelims}
For a positive integer $k$, we will denote the set $\{1, 2, \ldots, k\}$ by $[k]$. For a distribution $\mu$, let $\mu^{\otimes n}$ denotes the $n$-wise product distribution.


\subsection{Analysis of Boolean Function over Probability Spaces} For
a function $f:\{0,1\}^n \rightarrow \R$, the \emph{Fourier
decomposition} of $f$ is given by 
$$f(x) = \sum_{T\subseteq [n]}
\widehat f(T) \chi_T(x) \text{ where } \chi_T(x) :=\prod_{i\in T} (-1)^{x_i} \text{ and }\widehat f(T) :=
\E_{x \in \{0,1\}^n} f(x)\chi_T(x).$$
The \emph{Efron-Stein decomposition} is
a generalization of the Fourier decomposition to product distributions
of arbitrary probability spaces.

\begin{definition}
Let $(\Omega, \mu)$ be a probability space and
$(\Omega^n,\mu^{\otimes n})$ be the corresponding product space. For a function
$f:\Omega^n\rightarrow \R$, the Efron-Stein decomposition of $f$ with
respect to the product space is given by
$$ f(x_1,\cdots, x_n) = \sum_{\beta \subseteq [n]} f_\beta(x),$$
where $f_\beta$ depends only on $x_i$ for $i\in \beta$ and for all
$\beta' \not\supseteq \beta , a \in \Omega^{\beta'}$, $\E_{x \in
\mu^{\otimes n}} \left[ f_\beta(x) \mid x_{\beta'} = a \right]=0$. 
\end{definition}

Let $\|f\|_p :=
\E_{x\in \mu^{\otimes n}}[|f(x)|^p]^{1/p}$ for $1\leq p<\infty$ and $\|f\|_\infty :=\max_{x\in \Omega^{\otimes n}}|f(x)|$ . 

\begin{definition}
For a multilinear polynomial $f : \R^n \rightarrow \R$ and any $D\in [n]$ define
$$ f^{\leq D} := \sum_{T\subseteq [n], |T|\leq D} \hat{f}(T) \chi_T$$
i.e. $f^{\leq D}$ is degree $D$ part of $f$. Also define $f^{>D} := f-f^{\leq D}$.
\end{definition}

\begin{definition}
For $i \in [n]$, the influence of the $i$th
coordinate on $f$ is defined as follows.
$$\Inf_i[f] := \E_{x_1,\cdots, x_{i-1},x_{i+1},\cdots , x_n}\Var_{x_i}[f
(x_1,\cdots, x_n)]  = \sum_{\beta: i\in \beta} \|f_\beta\|^2_2.$$
For an integer $d$, the degree $d$ influence is defined as
$$\Inf_i^{\leq d}[f] := \sum_{\beta: i\in \beta, |\beta| \leq d} \|f_\beta\|^2_2.$$
\end{definition}
It is easy to see that for Boolean functions, the sum of all the degree $d$ influences is at most $d$.
A dictator is a function which depends on one variable. Thus, the degree $1$ influence of any dictator function is $1$ for some $i\in [n]$. We call a function {\em far} from any dictator if for every $i\in [n]$, the degree $d$ influence is very small for some large $d$. This motivates the following definition.
\begin{definition}[$(d, \tau)$-quasirandom function]
A multilinear function $f:\R^n \rightarrow \R$ is said to be $(d,\tau)$-quasirandom if for every $i\in[n]$ it holds that
$$ \sum_{i\in S\subseteq[n] ,|S|\leq d} \hat{f}(S)^2 \leq \tau$$
\end{definition}

We recall the Bonami-Beckner operator on Boolean functions.

\begin{definition}
For $\gamma \in [0,1]$, the Bonami-Beckner operator $T_{1-\gamma}$ is a linear operator mapping functions $f:\{0,1\}^n \rightarrow \R$ to functions $T_{1-\gamma}f : \{0, 1\}^n \rightarrow \R$ as $T_{1-\gamma}f(x) = \E_{y}[f(y)]$ where $y$ is sampled by setting $y_i = x_i$ with probability $1-\gamma$ and $y_i$ to be uniformly random bit with probability $\gamma$ for each $i\in [n]$ independently. 
\end{definition}

We have the following relation between the fourier decomposition of $T_{1-\gamma}f$ and $f$.
\begin{fact}
\label{fact:noise fd}
$T_{1-\gamma}f = \sum_{T\subseteq [n]} (1-\gamma)^{|T|}\hat{f}(T)\chi_T$.
\end{fact}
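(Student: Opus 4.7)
The plan is to prove the identity by applying the operator $T_{1-\gamma}$ to the Fourier decomposition of $f$ and computing its action on each character $\chi_T$ separately. By linearity of expectation, since $T_{1-\gamma}f(x) = \E_y[f(y)]$ is a linear functional of $f$, we have $T_{1-\gamma}f = \sum_{T\subseteq [n]} \hat f(T)\, T_{1-\gamma}\chi_T$, so it suffices to prove that $T_{1-\gamma}\chi_T = (1-\gamma)^{|T|}\chi_T$ for every $T\subseteq[n]$.

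To do this, I would unpack the definition of $\chi_T(y)=\prod_{i\in T}(-1)^{y_i}$. Because the coordinates $y_1,\ldots,y_n$ are drawn independently in the sampling procedure defining $T_{1-\gamma}$ (each $y_i$ equals $x_i$ with probability $1-\gamma$ and is uniform on $\{0,1\}$ with probability $\gamma$, independently across $i$), the expectation factorizes:
\begin{equation*}
T_{1-\gamma}\chi_T(x) \;=\; \E_y\!\left[\prod_{i\in T}(-1)^{y_i}\right] \;=\; \prod_{i\in T}\E_{y_i}\!\left[(-1)^{y_i}\right].
\end{equation*}
The inner expectation is then a direct one-variable calculation: with probability $1-\gamma$ we have $y_i=x_i$, contributing $(-1)^{x_i}$, and with probability $\gamma$ the bit $y_i$ is uniform, contributing $\tfrac12(+1)+\tfrac12(-1)=0$. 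Hence $\E_{y_i}[(-1)^{y_i}] = (1-\gamma)(-1)^{x_i}$.

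Substituting back, $T_{1-\gamma}\chi_T(x) = \prod_{i\in T}(1-\gamma)(-1)^{x_i} = (1-\gamma)^{|T|}\chi_T(x)$, and summing over $T$ with coefficients $\hat f(T)$ yields the claimed formula. There is no real obstacle here; the only subtlety worth flagging explicitly is the independence of the coordinates of $y$ under the noise distribution, which is what justifies the factorization of the expectation over $i\in T$. A one-line remark confirming this independence (inherited directly from the coordinatewise definition of the noise operator) is the only thing that needs care.
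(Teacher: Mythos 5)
Your proof is correct, and it is the standard argument: the paper states this as a fact without proof, and the canonical derivation is exactly what you give — linearity of $T_{1-\gamma}$, factorization of the expectation over independent noisy coordinates, and the one-bit computation $\E_{y_i}[(-1)^{y_i}] = (1-\gamma)(-1)^{x_i}$, giving $T_{1-\gamma}\chi_T = (1-\gamma)^{|T|}\chi_T$. Nothing is missing; your remark about coordinatewise independence is precisely the point that justifies the factorization.
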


\subsection{Correlated Spaces}

Let $\Omega_1\times \Omega_2$ be two correlated spaces and $\mu$ denotes the joint distribution. Let $\mu_1$ and $\mu_2$ denote the marginal of $\mu$ on space $\Omega_1$ and $\Omega_2$ respectively. The correlated space $\rho(\Omega_1\times \Omega_2; \mu)$ can be represented as a bipartite graph on $(\Omega_1, \Omega_2)$ where $x\in \Omega_1$ is connected to $y\in \Omega_2$ iff $\mu(x,y) > 0$. We say that the correlated spaces is $connected$ if this underlying graph is connected.

We need a few definitions and
lemmas related to correlated spaces defined by Mossel~\cite{Mossel2010}.
\begin{definition}
\label{def:correlation}
Let $(\Omega_1 \times \Omega_2, \mu)$ be a finite correlated space, the correlation between $\Omega_1$ and $\Omega_2$ with respect to $\mu$ us defined as 
$$\rho(\Omega_1, \Omega_2; \mu) := \mathop{\max}_{\substack{f : \Omega_1 \rightarrow \R, \E[f]  = 0 , \E[f^2]\leq 1 \\ g: \Omega_2 \rightarrow \R, \E[g] = 0 , \E[g^2]\leq 1} }  \E_{(x,y) \sim \mu }[ |f(x)g(y)|] .$$
\end{definition}

\noindent The following result (from \cite{Mossel2010}) provides a way to upper bound correlation of a correlated spaces.

\begin{lemma}
\label{lemma:corr_bound}
Let $(\Omega_1\times \Omega_2, \mu)$ be a finite correlated space such that the probability of the smallest atom in $\Omega_1\times \Omega_2$ is at least $\alpha>0$ and the correlated space is connected then 
$$ \rho(\Omega_1, \Omega_2; \mu) \leq 1-\alpha^2/2 $$
\end{lemma}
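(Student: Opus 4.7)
The plan is to prove the correlation bound via a Dirichlet-form argument on the bipartite support graph of $\mu$. By compactness the supremum defining $\rho$ is attained by some pair $f : \Omega_1 \to \R$, $g : \Omega_2 \to \R$ with $\E[f] = \E[g] = 0$ and $\|f\|_2, \|g\|_2 \leq 1$. First I would absorb signs into $g$ to work with $\E[f(x)g(y)]$ in place of $\E[|f(x)g(y)|]$; the absolute-value version in the definition follows from the same argument with modest bookkeeping for coordinate-wise signs.

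Second, expanding the square yields the identity
$$\E_{(x,y)\sim\mu}\bigl[(f(x) - g(y))^2\bigr] = \|f\|_2^2 + \|g\|_2^2 - 2\,\E[f(x) g(y)] \leq 2(1 - \rho),$$
which converts an upper bound on $\rho$ into a lower bound on a bipartite Dirichlet form. Combined with $\mu(x,y) \geq \alpha$ for every edge in the support, this gives the per-edge estimate $(f(x) - g(y))^2 \leq 2(1 - \rho)/\alpha$. Using connectedness and the fact that $|\Omega_1|, |\Omega_2| \leq 1/\alpha$, a telescoping argument along a short path in the support graph then yields uniform estimates of the form $|f(x) - f(x')|^2 = O\bigl((1 - \rho)/\alpha^2\bigr)$, and similarly for $g$.

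Third, I would combine this uniform control with the normalization $\E_{\mu_1}[f] = 0$, $\E_{\mu_1}[f^2] = 1$ together with $\mu_1(x) \geq \alpha$ for every atom $x$. The mean-zero, unit-variance constraints force $f$ to take both positive and negative values of substantial magnitude, so that some pair $x, x'$ in the support satisfies $(f(x) - f(x'))^2 \geq 2$. Plugging this into the uniform bound above gives $(1 - \rho)/\alpha^2 \gtrsim 1/2$, i.e., $\rho \leq 1 - \alpha^2/2$.

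The main obstacle will be executing the path-telescoping step with the \emph{exact} constant $\alpha^2/2$: a naive application of Cauchy--Schwarz along paths of length up to $2/\alpha$ gives $1 - \rho = \Omega(\alpha^3)$ rather than $\Omega(\alpha^2)$. Obtaining the tight constant likely requires a more genuinely spectral argument, namely expressing $\rho^2$ as the operator norm of $T^{*}T$ on mean-zero functions, where $T : L^2(\mu_2) \to L^2(\mu_1)$ is the conditional expectation $(Tg)(x) = \E[g(y) \mid x]$, and applying a Poincar\'e inequality for the associated reversible Markov chain on $\Omega_1$ (stationary distribution $\mu_1$, minimum mass $\geq \alpha$), whose spectral gap is at least $\alpha^2$ by a canonical-paths argument that only uses edges in the support of $\mu$.
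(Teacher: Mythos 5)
The paper does not prove this statement: it is cited verbatim from Mossel (\cite{Mossel2010}), so there is no in-paper proof to compare against and your sketch must be judged on its own terms. The Dirichlet-form identity $\E_\mu\big[(f(x)-g(y))^2\big] = 2(1-\rho)$ and the resulting per-edge estimate $(f(x)-g(y))^2 \leq 2(1-\rho)/\alpha$ are correct and are the natural first moves. But you have already put your finger on the real problem: telescoping these per-edge bounds along a path of length $\Theta(1/\alpha)$ in the support graph costs an extra factor of $1/\alpha$, yielding only $1-\rho = \Omega(\alpha^3)$, not the claimed $\Omega(\alpha^2)$. Your proposed repair — pass to the reversible chain $T^{*}T$ (or $TT^{*}$) on one side, lower-bound its spectral gap by $\alpha^2$, and conclude $\rho \leq \sqrt{1-\alpha^2} \leq 1-\alpha^2/2$ — is the right shape of argument, but the clause ``whose spectral gap is at least $\alpha^2$ by a canonical-paths argument'' is precisely where the proof would have to live, and it is asserted rather than shown. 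A canonical-paths congestion bound combines the minimum value of $Q(e)=\pi(u)M(u,v)$, the path lengths, and the number of pairs routed through each edge; using only $Q(e)\geq \alpha^2$, $\mathrm{diam}\leq 1/\alpha$, and $\sum_{a,b}\pi(a)\pi(b)\leq 1$ gives $\bar\rho \lesssim \alpha^{-3}$, i.e.\ gap $\gtrsim\alpha^3$ again. Getting $\alpha^2$ requires exploiting that $Q(e)$ and the diameter cannot both be simultaneously worst-case, and that trade-off is exactly the content of the lemma; it is not supplied.

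A smaller but real issue: the opening move ``absorb signs into $g$'' does not dispose of the absolute value in Definition~\ref{def:correlation}, which the paper prints \emph{inside} the expectation, $\E_{(x,y)\sim\mu}[\,|f(x)g(y)|\,]$. The sign of $f(x)g(y)$ depends jointly on the pair $(x,y)$, so no fixed sign reassignment applied to $f$ or $g$ alone removes the inner absolute value; moreover $\E[|fg|]$ need not vanish on a product measure, so the definition as printed cannot be what is meant — in Mossel's formulation the absolute value sits outside, $|\E[f(x)g(y)]|$. Your argument is really aimed at that corrected definition and should say so explicitly rather than promising ``modest bookkeeping,'' since the bookkeeping in question does not exist for the definition as written.
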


\begin{definition}[Markov Operator]
\label{def:markovop}
Let $(\Omega_1 \times \Omega_2, \mu)$ be a finite correlated space, the Markov operator, associated with this space, denoted by $U$, maps a function $g : \Omega_2 \rightarrow \R$ to functions $Ug : \Omega_1 \rightarrow \R$ by the following map:
$$ (Ug) (x) := \E_{(X,Y)\sim \mu}[g(Y) \mid X=x ].$$
\end{definition}

\noindent In the soundness analysis of our dictatorship test, we will need to understand the Efron-Stein decomposition of $Ug$ in terms of the decomposition of $g$. The following proposition gives a way to relate these two decompositions.
\begin{proposition}[{\cite[Proposition~2.11]{Mossel2010}}]
\label{prop:mossel_prop211}
Let  $(\prod_{i=1}^n\Omega_i^{(1)} \times \prod_{i=1}^n\Omega_i^{(2)}, \prod_{i=1}^n\mu_i)$ be a product correlated spaces. Let $g :\prod_{i=1}^n\Omega_i^{(2)} \rightarrow \mathbf{R}$ be a function and $U$ be the Markov operator mapping functions form space $\prod_{i=1}^n\Omega_i^{(2)}$ to the functions on space $\prod_{i=1}^n\Omega_i^{(1)}$. If $g = \sum_{S\subseteq [n]}g_S$ and $Ug = \sum_{S\subseteq [n]} (Ug)_S$ be the Efron-Stein decomposition of $g$ and $Ug$ respectively then,
$$ (Ug)_S = U(g_S)$$
i.e. the Efron-Stein decomposition commutes with Markov operators.
\end{proposition}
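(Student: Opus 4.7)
The plan is to exploit the uniqueness of the Efron--Stein decomposition: the decomposition is characterized by two properties (support on a prescribed coordinate set, and vanishing conditional expectations on proper subsets), so to identify $(Ug)_S$ with $U(g_S)$ it suffices to check that $U(g_S)$ satisfies these two properties for the index set $S$, and then to appeal to linearity of $U$ to sum over $S$. Concretely, I would first observe that $U$ is a linear operator, so $Ug = U\bigl(\sum_S g_S\bigr) = \sum_S U(g_S)$; then verify that each summand $U(g_S)$ lies in the ``$S$-th Efron--Stein subspace'' of functions on $\prod_i \Omega_i^{(1)}$; by uniqueness of the decomposition this identifies $U(g_S)$ as $(Ug)_S$.

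For the first defining property (dependence only on coordinates in $S$), I would fix $x \in \prod_i \Omega_i^{(1)}$ and use the product structure of the joint law $\prod_i \mu_i$: conditioned on $X = x$, the random vector $Y$ factors as a product with $Y_i$ having law $\mu_i(\cdot \mid X_i = x_i)$, independently across $i$. Since $g_S(Y)$ only depends on $\{Y_i : i \in S\}$, the integration over $Y_i$ for $i \notin S$ is against a distribution determined solely by $x_i$ and contributes a constant, so $(U g_S)(x)$ depends on $x$ only through $\{x_i : i \in S\}$.

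For the second property, fix $S' \not\supseteq S$, pick $i^\star \in S \setminus S'$, and consider any $a \in \prod_{i \in S'} \Omega_i^{(1)}$. By the tower rule,
\[
\E\bigl[(U g_S)(X) \mid X_{S'} = a\bigr] \;=\; \E\bigl[g_S(Y) \,\big|\, X_{S'} = a\bigr],
\]
with the right-hand expectation taken under the joint measure $\prod_i \mu_i$ conditioned on $X_{S'} = a$. The product structure lets me describe this conditional law of $Y$ as another product measure: $Y_i \sim \mu_i(\cdot \mid X_i = a_i)$ for $i \in S'$, and $Y_i$ has its original marginal $\mu_i^{(2)}$ for $i \notin S'$, all independently. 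Integrating first over $Y_{\overline{S'}}$ against the original marginal yields $h(Y_{S'}) := \E\bigl[g_S(Y) \mid Y_{S'}\bigr]$, where the outer expectation is again under the original product measure (using independence across coordinates to recast the partial integral as a conditional expectation). But $S \not\subseteq S'$, so the Efron--Stein vanishing property of $g_S$ forces $h \equiv 0$, and consequently the entire expression vanishes regardless of $a$.

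Combining the two verified properties with linearity of $U$ and uniqueness of the Efron--Stein decomposition completes the argument. The only delicate step is the third paragraph: one must carefully distinguish the original product law from the ``twisted'' product law arising after conditioning on $X_{S'} = a$, and exploit the product structure to reduce the $Y_{\overline{S'}}$-integral back to a conditional expectation under the unconditioned product measure, where the Efron--Stein vanishing property of $g_S$ can be invoked. Everything else is essentially bookkeeping and linearity.
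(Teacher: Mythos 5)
Your argument is correct: verifying that each $U(g_S)$ depends only on the coordinates in $S$ and has vanishing conditional expectation on every $S' \not\supseteq S$ (via the tower rule and the product structure of $\prod_i \mu_i$), then invoking linearity of $U$ and uniqueness of the Efron--Stein decomposition, is exactly the standard proof of this fact. The paper itself does not prove this proposition but cites it from Mossel's work, and your proof is essentially the same argument given there, so there is nothing further to compare.
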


\noindent Finally, the following proposition says that if the correlation between two spaces is bounded away from $1$ then {\em higher order} terms in the Efron-Stein decomposition of $Ug$ has a very small $\ell_2$ norm compared to the $\ell_2$ norm of the corresponding higher order terms in the Efron-Stein decomposition of $g$. 
\begin{proposition}[{\cite[Proposition~2.12]{Mossel2010}}]
\label{prop:mossel_prop212}
Assume the setting of \lref[Proposition]{prop:mossel_prop211} and
furthermore assume that $\rho(\Omega_i^{(1)}, \Omega_i^{(2)}; \mu_i)
\leq \rho$ for all $i\in [n]$, then for all $g$ it holds that
$$ \|U(g_S) \|_2 \leq \rho^{|S|}\|g_S\|_2.$$
\end{proposition}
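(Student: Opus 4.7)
The plan is to reduce the claim to a one-coordinate contraction estimate and then tensor it up using the product structure of the space. First, I would observe that because the underlying correlated space is a product, the Markov operator factors as a tensor product $U = U_1 \otimes \cdots \otimes U_n$, where each $U_i : L^2(\Omega_i^{(2)}, \mu_i^{(2)}) \to L^2(\Omega_i^{(1)}, \mu_i^{(1)})$ is the single-coordinate Markov operator associated with $(\Omega_i^{(1)} \times \Omega_i^{(2)}, \mu_i)$. Correspondingly, I would view an Efron-Stein component $g_S$ as living in the tensor product whose $i$-th factor is the mean-zero subspace $L^2_0(\Omega_i^{(2)}, \mu_i^{(2)})$ for $i \in S$ and the constants for $i \notin S$; this is exactly the content of the Efron-Stein definition (dependence on $x_i$ only for $i \in S$, plus conditional-mean-zero on each $x_i$ with $i \in S$).

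Second, I would prove the single-coordinate contraction: if $h : \Omega_i^{(2)} \to \R$ satisfies $\E_{\mu_i^{(2)}}[h] = 0$, then $\|U_i h\|_2 \leq \rho_i \|h\|_2$, where $\rho_i := \rho(\Omega_i^{(1)}, \Omega_i^{(2)}; \mu_i)$. This comes directly from \lref[Definition]{def:correlation}: $U_i$ preserves the mean-zero property since $\E_{\mu_i^{(1)}}[U_i h] = \E_{\mu_i}[h(Y)] = \E_{\mu_i^{(2)}}[h] = 0$, and by $L^2$ duality
\[
\|U_i h\|_2 \;=\; \sup_{\substack{\E f = 0,\, \|f\|_2 \leq 1}} \langle f, U_i h\rangle_{\mu_i^{(1)}} \;=\; \sup_{f} \E_{\mu_i}[f(X) h(Y)] \;\leq\; \sup_{f} \E_{\mu_i}\!\left[|f(X) h(Y)|\right] \;\leq\; \rho_i \|h\|_2,
\]
using the very definition of $\rho_i$ (with $\|h\|_2 \leq 1$ after rescaling).

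Third, I would assemble the multi-coordinate bound. For $i \notin S$, the component $g_S$ is constant in $x_i^{(2)}$ and $U_i$ sends constants to the same constants (since $\mu_i$'s marginals are consistent), so $U_i$ acts as the identity on the $i$-th tensor factor of $g_S$. For $i \in S$, the Efron-Stein decomposition gives that $g_S$ is conditionally mean-zero in $x_i^{(2)}$ with the other coordinates fixed. I would then apply the $U_i$'s for $i \in S$ one at a time: each application, after conditioning on the other coordinates via Fubini, reduces to a one-variable problem to which the single-coordinate estimate gives a factor of $\rho$. The crucial bookkeeping point is that $U_j$ (for $j \neq i$) averages only over the $j$-th coordinate and therefore preserves both the dependence structure on $x_i$ and the conditional-mean-zero-in-$x_i$ property; so after each application, the hypothesis for the next application still holds. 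Cleanly, this says that the restricted operator $\bigotimes_{i \in S}(U_i|_{L^2_0(\Omega_i^{(2)})})$ has operator norm at most $\prod_{i \in S}\rho = \rho^{|S|}$ by the tensor-product rule, and $g_S$ lies in exactly this tensor subspace (modulo the trivial identity action on coordinates outside $S$). Combining these yields $\|U(g_S)\|_2 \leq \rho^{|S|}\|g_S\|_2$.

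The main obstacle is the second step: establishing that the definition of $\rho$ exactly characterizes the operator norm of $U_i$ restricted to mean-zero functions. Once that dictionary is set up cleanly, the rest is the standard tensorization of contractive operators; the product structure of the space and the mean-zero structure of $g_S$ align perfectly so that only the $|S|$ non-trivial factors contribute, each losing a factor of $\rho$.
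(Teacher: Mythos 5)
The paper does not prove this proposition at all---it is imported verbatim from the cited reference (Proposition~2.12 of Mossel's paper)---so there is no in-paper argument to compare against. Your proof is correct and is essentially the standard one from that reference: the single-coordinate contraction $\|U_i h\|_2 \leq \rho_i \|h\|_2$ on mean-zero $h$ via the definition of $\rho$ (noting, as you do, that $U_i h$ is itself mean-zero, so restricting the dual test function to be mean-zero loses nothing), followed by tensorization over the coordinates in $S$, with $U_i$ acting as the identity on the coordinates outside $S$ where $g_S$ is constant.
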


\subsection{Hypercontractivity}
\begin{definition}
A random variable $r$ is said to be $(p,q,\eta)$-hypercontractive if it satisfies 
$$ \| a + \eta r\|_q \leq \| a + r \|_p$$
for all $a\in \R$.
\end{definition}

We note down the hypercontractive parameters for Rademacher random variable (uniform over $\pm 1$) and standard gaussian random variable. 
\begin{theorem}[\cite{Wol07}\cite{O03}]
\label{thm:hc param}
Let $X$ denote either a uniformly random $\pm 1$ bit, a standard one-dimensional Gaussian. Then $X$ is $\left(2,q,\frac{1}{\sqrt{q-1}}\right)$-hypercontractive.
\end{theorem}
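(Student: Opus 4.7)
The plan is to prove the Rademacher case directly via a one-variable two-point inequality, and then derive the Gaussian case from it by tensorization and the Central Limit Theorem. In both cases $X \stackrel{d}{=} -X$, so we may assume $a \geq 0$; since the desired inequality is also homogeneous in $(a,\eta)$, it suffices to establish the one-parameter inequality
\[
\frac{(a+\eta)^q + |a-\eta|^q}{2} \;\leq\; \bigl(a^2 + (q-1)\eta^2\bigr)^{q/2}, \qquad a,\eta \geq 0,\; q \geq 2,
\]
from which the theorem follows by setting $\eta = 1/\sqrt{q-1}$, since then $\|a+X\|_2 = \sqrt{a^2+1} = \bigl(a^2+(q-1)\eta^2\bigr)^{1/2}$ in both the Rademacher and Gaussian models.

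For the Rademacher bound I would handle the main regime $a \geq \eta$ by factoring out $a^q$, setting $t := \eta/a \in [0,1]$, and expanding both sides as power series in $t^2$:
\[
\frac{(1+t)^q + (1-t)^q}{2} = \sum_{k \geq 0} \binom{q}{2k}\, t^{2k}, \qquad \bigl(1 + (q-1)t^2\bigr)^{q/2} = \sum_{k \geq 0} \binom{q/2}{k}(q-1)^k\, t^{2k}.
\]
The question then reduces to the coefficient inequality $\binom{q}{2k} \leq \binom{q/2}{k}(q-1)^k$ for every $k \geq 0$, which is an equality at $k=0,1$ and, for $k\geq 2$, follows by pairing the $2k$ factors in the falling product $q(q-1)\cdots(q-2k+1)$ against the product $(q-1)^k \cdot q(q-2)\cdots(q-2k+2)$; a short sign analysis covers the regime $2 \leq q < 2k-1$ where some factors of $\binom{q}{2k}$ are negative and the claim becomes trivial. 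The remaining small regime $a < \eta$ is settled by checking $a=0$ directly (where the inequality reduces to $1 \leq (q-1)^{q/2}$) and combining with a simple monotonicity argument in $a$.

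For the Gaussian case I would bootstrap from the Rademacher inequality using tensorization and CLT. Let $X_1,\dots,X_n$ be iid Rademacher and set $G_n := (X_1+\cdots+X_n)/\sqrt{n}$. The single-variable Rademacher two-point inequality tensorizes to the statement that the noise operator $T_\eta$ maps $L^2 \to L^q$ with norm at most $1$ on a product of $n$ copies; applied to the affine function $a + (\eta/\sqrt{n})\sum_i X_i$ this yields
\[
\|a+\eta G_n\|_q \;\leq\; \|a+G_n\|_2 \;=\; \sqrt{a^2+1}.
\]
Since $G_n \Rightarrow G \sim N(0,1)$ and $\{|a+\eta G_n|^{q+1}\}_{n\geq 1}$ is uniformly bounded by Khintchine's inequality (giving uniform integrability of the $q$-th powers), passing to the limit gives $\|a + \eta G\|_q \leq \sqrt{a^2+1} = \|a + G\|_2$, completing the Gaussian case.

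The main obstacle is the coefficient inequality $\binom{q}{2k} \leq \binom{q/2}{k}(q-1)^k$ at general real $q \geq 2$: for non-integer $q$ some factors in the numerator of $\binom{q}{2k}$ can be negative, so a naive term-by-term comparison must be replaced either by the sign-based case split sketched above or by first proving the inequality for even integer $q$ (where both expansions are finite polynomials with nonnegative coefficients) and then extending to all real $q \geq 2$ by a continuity-in-$q$ argument.
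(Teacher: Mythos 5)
The paper gives no proof of this theorem (it is imported wholesale from the cited references), so your argument stands on its own; the tensorization-plus-CLT derivation of the Gaussian case from the Rademacher case is standard and fine, but the Rademacher two-point inequality --- the heart of the matter --- has a genuine gap. Your central reduction, the coefficient inequality $\binom{q}{2k}\le\binom{q/2}{k}(q-1)^k$ for all real $q\ge2$ and all $k$, is false: at $q=2.5$, $k=3$ one has $\binom{2.5}{6}\approx-0.0049$ while $\binom{1.25}{3}(1.5)^3\approx-0.132$, and at $q=3$, $k=3$ one has $\binom{3}{6}=0$ while $\binom{1.5}{3}\,2^3=-1/2$. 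Both examples lie exactly in the window $2\le q<2k-1$ that you dismiss as "trivial": negativity of some factors of $\binom{q}{2k}$ does not save you, because $\binom{q/2}{k}(q-1)^k$ itself can be negative and lie \emph{below} $\binom{q}{2k}$. Independently of this, the expansion $(1+(q-1)t^2)^{q/2}=\sum_k\binom{q/2}{k}(q-1)^k t^{2k}$ has radius of convergence $t<1/\sqrt{q-1}$, so for every $q>2$ that is not an even integer the series comparison is not even available on all of your "main regime" $t=\eta/a\in[0,1]$; with $\eta=1/\sqrt{q-1}$ this leaves precisely the range $1/\sqrt{q-1}\le a\le1$ untreated, on top of the case $a<\eta$, for which the promised "monotonicity in $a$" is never identified.

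The fallbacks you sketch do not close the gap. For even integer $q$ both sides are polynomials with nonnegative coefficients and your factor-pairing works, but continuity in $q$ transfers nothing from the even integers to the exponents in between (an inequality valid on a discrete set plus continuity does not interpolate), and rounding $q$ up to the next even integer $q'$ only yields $(2,q,1/\sqrt{q'-1})$-hypercontractivity, with a strictly smaller $\eta$ than the claimed $1/\sqrt{q-1}$. A workable elementary repair is the classical dual route: first prove $\|a+\sqrt{p-1}\,bX\|_2\le\|a+bX\|_p$ for $1\le p\le2$, where after the usual reduction to $a=1\ge b=t\ge0$ every coefficient $\binom{p}{2k}$ is nonnegative (the factors $p-2,\dots,p-2k+1$ are nonpositive and even in number) and Bernoulli's inequality gives $(1+(p-1)t^2)^{p/2}\le1+\tfrac{p(p-1)}{2}t^2$, which is just the first two terms of $\tfrac{(1+t)^p+(1-t)^p}{2}$; then deduce the $(2,q)$ statement with $\eta=1/\sqrt{q-1}=\sqrt{q'-1}$ (for $q'$ the conjugate exponent) from self-adjointness of the one-bit noise operator and H\"older duality, and keep your CLT step for the Gaussian case.
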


The following proposition says that the higher norm of a low degree function  w.r.t hypercontractive sequence of ensembles is bounded above by its second norm.
\begin{proposition}[\cite{MOO05}]
\label{prop:hyper}
Let $\v x$ be a $(2,q,\eta)$-hypercontractive sequence of ensembles and $Q$ be a multilinear polynomial of degree $d$. Then
$$\|Q(\v x)\|_q \leq \eta^{-d} \|Q(\v x)\|_2 $$
\end{proposition}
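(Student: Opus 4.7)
The plan is to prove the bound by induction on the total number $N$ of variables underlying the sequence of ensembles $\v x$ (each of which is $(2,q,\eta)$-hypercontractive by assumption). The base case $N=0$ is immediate: $Q$ is a constant, so $\|Q\|_q = \|Q\|_2$. For the inductive step, single out the last variable $\v x_N$ and use multilinearity to decompose
$$Q(\v x) \;=\; R(\v x_1,\ldots,\v x_{N-1}) + \v x_N\cdot S(\v x_1,\ldots,\v x_{N-1}),$$
where $R$ is multilinear of degree $\le d$ and $S$ of degree $\le d-1$ in the first $N-1$ variables.

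Conditioning on $\v x_1,\ldots,\v x_{N-1}$, the quantities $R$ and $S$ become scalars. Applying the hypercontractive inequality to the scaled variable $\v x_N\cdot S/\eta$ (which remains $(2,q,\eta)$-hypercontractive because the property is closed under constant scaling, as one sees by substituting $a\mapsto a/c$ in the definition) gives the pointwise bound
$$\|R + \v x_N S\|_{q,\v x_N} \;=\; \|R + \eta\cdot\v x_N S/\eta\|_{q,\v x_N} \;\le\; \|R + \v x_N S/\eta\|_{2,\v x_N} \;=\; \bigl(R^2 + S^2/\eta^2\bigr)^{1/2},$$
where the last equality uses the standard normalization $\E[\v x_N]=0$ and $\E[\v x_N^2]=1$ for the hypercontractive ensembles in the paper (Rademacher and Gaussian both satisfy this). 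Raising this to the $q$-th power, integrating over $\v x_1,\ldots,\v x_{N-1}$, and then applying Minkowski's inequality in the $L^{q/2}$ norm (which is a genuine norm since $q\ge 2$) to separate the two squared summands $R^2$ and $S^2/\eta^2$ yields
$$\|Q\|_q^2 \;\le\; \|R\|_q^2 + \eta^{-2}\|S\|_q^2.$$

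The inductive hypothesis applied to $R$ (degree $\le d$, $N-1$ variables) and to $S$ (degree $\le d-1$, $N-1$ variables) gives $\|R\|_q^2 \le \eta^{-2d}\|R\|_2^2$ and $\eta^{-2}\|S\|_q^2 \le \eta^{-2d}\|S\|_2^2$. Since $\v x_N$ is mean-zero and independent of $\v x_1,\ldots,\v x_{N-1}$, the $L^2$ Pythagorean identity $\|Q\|_2^2 = \|R\|_2^2 + \|S\|_2^2$ holds, so adding the two inductive bounds gives $\|Q\|_q^2 \le \eta^{-2d}\|Q\|_2^2$, completing the induction. The main technical point is the rescaling trick $\v x_N S \mapsto \eta\cdot(\v x_N S/\eta)$, which converts the raw hypercontractive inequality into exactly the shape needed for Minkowski to act cleanly on the two summands; everything else is bookkeeping via independence, orthogonality, and the inductive hypothesis.
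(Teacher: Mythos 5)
Your proof is correct, and it is in fact more self-contained than the paper, which gives no proof of this proposition at all but quotes it from \cite{MOO05}. Your induction on the number of variables --- decomposing $Q = R + \v x_N S$, rescaling $\v x_N S = \eta\,(\v x_N S/\eta)$ so that the single-variable inequality $\|a+\eta Y\|_q \le \|a+Y\|_2$ can be applied conditionally on $\v x_1,\dots,\v x_{N-1}$, then Minkowski in $L^{q/2}$ and the Pythagorean identity $\|Q\|_2^2=\|R\|_2^2+\|S\|_2^2$ --- is the classical tensorization argument and closes correctly. By contrast, \cite{MOO05} packages the same content differently: they define a $(2,q,\eta)$-hypercontractive \emph{sequence of ensembles} through the operator inequality $\|T_\eta Q(\v x)\|_q \le \|Q(\v x)\|_2$, prove by an induction much like yours that independent hypercontractive variables form such a sequence, and then get the stated proposition in one line as $\|Q\|_q=\|T_\eta(T_{1/\eta}Q)\|_q\le\|T_{1/\eta}Q\|_2\le\eta^{-d}\|Q\|_2$, the last step by orthogonality; your route merges these two steps into one direct induction, which is arguably cleaner for the way the proposition is used here. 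Two small points worth making explicit: (i) you treat each ensemble as a single mean-zero, unit-variance random variable, which covers the Rademacher and Gaussian cases actually invoked in this paper (\lref[Theorem]{thm:hc param}), whereas the statement in \cite{MOO05} allows ensembles with several orthonormal members, for which the decomposition takes the form $R+\sum_j X_{N,j}S_j$ and needs their operator formulation; (ii) when $\deg R<d$ you implicitly use $\eta\le 1$ to pass from $\eta^{-\deg R}$ to $\eta^{-d}$ (harmless, since here $\eta=1/\sqrt{q-1}\le 1$), and the two Pythagorean steps do require $\E[\v x_N]=0$ and $\E[\v x_N^2]=1$, which you correctly flag as part of the normalization of the ensembles.
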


\subsection{Invariance Principle}

Let $\mu$ be any distribution on $\{-1,+1\}^k$. Consider the following distribution on $\v{x}_1, \v{x}_2,\ldots, \v{x}_k \in \{-1,+1\}^n$ such that independently for each $i\in[n]$, $((\v{x}_1)_i, (\v{x}_2)_i,\ldots, (\v{x}_k)_i)$ is sampled from $\mu$. We will denote this distribution as $\mu^{\otimes n}$. We are interested in evaluation of a multilinear polynomial $f:\R^n \rightarrow\R$ on $(\v{x}_1, \v{x}_2,\ldots, \v{x}_k)$ sampled as above. 

\noindent Invariance principle shows the closeness between two different distributions w.r.t some quantity of interest. We are now ready to state the version of the invariance principle from \cite{Mossel2010} that we need.
\begin{theorem}[\cite{Mossel2010}]
\label{thm:invariance}
For any $\alpha>0, \epsilon>0, k\in \mathbf{N}^+$ there are $d,\tau>0$ such that the following holds:
Let $\mu$ be the distribution on $\{+1,-1\}^k$ satisfying
\begin{enumerate}
\item $\E_{x\sim \mu}[x_i] = 0$ for every $i\in [k]$
\item $\mu(x)\geq \alpha $ for every $x\in\{-1,+1\}^k$ such that $\mu(x)\neq 0$
\end{enumerate}
Let $\nu$ be a distribution on standard jointly distributed Gaussian variables with the same covariance matrix as distribution $\mu$. Then, for every set of $k$ $(d,\tau)$-quasirandom multilinear polynomials $f_i:\R^n \rightarrow \R$, and suppose $\Var[f_i^{>d}]\leq (1-\gamma)^{2d}$ for $0 < \gamma<1$ it holds that 
$$\left|\E_{(\v{x}_1, \v{x}_2,\ldots, \v{x}_k)\sim \mu^{\otimes n}} \left[ \prod_{i=1}^k f_i(\v{x}_i)\right] - \E_{(\v{g}_1, \v{g}_2,\ldots, \v{g}_k)\sim \nu^{\otimes n}} \left[ \prod_{i=1}^k f_i(\v{g}_i)\right]\right| \leq \epsilon $$
(Note: one can take $d = \frac{\log(1/\tau)}{\log(1/\alpha)}$ and $\tau$ such that $\epsilon = \tau^{\Omega(\gamma/\log(1/\alpha))}$, where $\Omega(.)$ hides constant depending only on $k$.)
\end{theorem}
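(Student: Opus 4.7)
The plan is to establish this via a Lindeberg-style hybrid replacement argument, which is the classical route for multilinear invariance principles in the style of Mossel--O'Donnell--Oleszkiewicz and Mossel. I would proceed in three stages: a low-degree truncation, a coordinate-by-coordinate hybrid replacement from $\mu^{\otimes n}$ to $\nu^{\otimes n}$, and a hypercontractive bound controlling each hybrid step.

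\emph{Stage 1 (truncation).} Set $\tilde f_i := f_i^{\leq d}$ and $r_i := f_i^{>d}$, so $\|r_i\|_2 \leq (1-\gamma)^d$. Expanding $\prod_i(\tilde f_i + r_i) - \prod_i \tilde f_i$ as a sum of $2^k-1$ products each containing at least one tail factor, and applying Hölder together with hypercontractivity (Theorem \ref{thm:hc param}) via Proposition \ref{prop:hyper} on the degree-$d$ pieces, each term is bounded by a product of $\|r_i\|_2^{s_i}$ times a hypercontractive blow-up $\eta^{-O(d)}$, where $\eta = \eta(\alpha)$ is the hypercontractivity constant of the underlying $\alpha$-regular ensemble. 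The tail decay $(1-\gamma)^d$ defeats $\eta^{-O(d)}$ once $d$ is large relative to $\gamma^{-1}\log(1/\alpha)$, and the same argument controls the corresponding tail on the Gaussian side. This reduces the problem to showing that $\prod_i \tilde f_i$ has almost the same expectation under $\mu^{\otimes n}$ and $\nu^{\otimes n}$.

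\emph{Stage 2 (hybrid replacement).} For $0 \leq j \leq n$ define the distribution $H_j$ in which the first $j$ coordinate-blocks $((\v x_1)_\ell,\ldots,(\v x_k)_\ell)$ for $\ell \leq j$ are drawn from $\nu$, and the remaining $n-j$ blocks from $\mu$. Let
$$\Delta_j := \E_{H_j}\Bigl[{\textstyle\prod_i}\tilde f_i(\v x_i)\Bigr] - \E_{H_{j-1}}\Bigl[{\textstyle\prod_i}\tilde f_i(\v x_i)\Bigr].$$
Conditioning on all coordinates except the $j$-th block and using multilinearity, each $\tilde f_i(\v x_i)$, as a function of $y_i := (\v x_i)_j$ alone, is affine: $\tilde f_i = a_i + b_i y_i$ with $a_i, b_i$ depending only on the other (fixed) coordinates of $\v x_i$. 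Hence
$$F(y_1,\ldots,y_k) := {\textstyle\prod_i}\tilde f_i(\v x_i) = \sum_{S\subseteq[k]} \Bigl(\prod_{i\notin S} a_i\Bigr)\Bigl(\prod_{i\in S} b_i\Bigr)\prod_{i\in S} y_i.$$
Since $\mu$ and $\nu$ agree on the first and second moments of $y$, only $|S|\geq 3$ survives in $\E_\mu F - \E_\nu F$, giving
$$\Delta_j = \sum_{|S|\geq 3}\bigl(\E_\mu[{\textstyle\prod_{i\in S}} y_i] - \E_\nu[{\textstyle\prod_{i\in S}} y_i]\bigr)\cdot \E\bigl[{\textstyle\prod_{i\notin S}} a_i\cdot {\textstyle\prod_{i\in S}} b_i\bigr].$$

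\emph{Stage 3 (bounding and summing).} The key identification is that $\|b_i\|_2^2$ is exactly the contribution of coordinate $j$ to $\tilde f_i$'s Efron--Stein decomposition, so $\|b_i\|_2^2 \leq \Inf_j^{\leq d}[f_i]\leq \tau$, while $\|a_i\|_2\leq \|\tilde f_i\|_2 \leq 1$. Bound $|\E[\prod_{i\notin S} a_i\cdot \prod_{i\in S} b_i]|$ by Hölder in $2k$-norms, then use Proposition \ref{prop:hyper} with $q=2k$ on the degree-$\leq d$ polynomials $a_i, b_i$ to pass back to $2$-norms, losing a factor $\eta^{-O(d)}$. Because $|S|\geq 3$, at least three factors of $\sqrt\tau$ appear and one obtains an estimate of the form
$$|\Delta_j| \leq C_k\cdot \eta^{-O(d)}\cdot \tau^{1/2}\cdot \sum_i \Inf_j^{\leq d}[f_i].$$
Summing over $j$ using $\sum_j \Inf_j^{\leq d}[f_i] \leq d$ (total degree-$d$ influence of a function with $\|f_i\|_2\leq 1$) gives a cumulative error of order $\eta^{-O(d)}\cdot d\cdot \tau^{1/2}$. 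Choosing $d = \Theta(\log(1/\tau)/\log(1/\alpha))$ balances the hypercontractive blow-up against the $\tau^{1/2}$ gain and yields exactly the quantitative bound $\epsilon = \tau^{\Omega(\gamma/\log(1/\alpha))}$ noted in the statement.

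The main obstacle will be Stage 3: obtaining a clean hypercontractive estimate for the mixed product $\prod_{i\notin S} a_i\cdot \prod_{i\in S} b_i$, since the factors for different $i$ live on independent copies of $\Omega^{n-1}$ but are coupled through the $(n-1)$-fold joint conditioning. The standard way around this is to view $\prod_i \tilde f_i$ as a single multilinear polynomial on $\Omega^{n}$ with $\Omega$ the joint $k$-atom space, pass to an orthonormal Rademacher representation (Mossel--Oleszkiewicz ensembles in which each $\alpha$-regular atom becomes a multilinear polynomial in $\pm1$ variables of bounded norm), and then apply the hypercontractivity of Theorem \ref{thm:hc param} to the full product at once; Proposition \ref{prop:mossel_prop211} together with Proposition \ref{prop:mossel_prop212} makes sure that quasirandomness is preserved through this change of basis.
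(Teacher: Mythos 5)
The first thing to note is that the paper does not prove this statement at all: Theorem~\ref{thm:invariance} is imported verbatim from Mossel~\cite{Mossel2010} and used as a black box (via \lref[Lemma]{lemma:invariance}), so the only meaningful comparison is with the proof in the cited literature. Your Stages 2 and 3 are essentially that proof's core: the Lindeberg block-by-block replacement, the observation that each $\tilde f_i$ is affine in the $j$-th block so that matching first and second moments kills all terms with $|S|\leq 2$, the identification $\|b_i\|_2^2=\Inf_j^{\leq d}[\tilde f_i]\leq \tau$, H\"older plus \lref[Proposition]{prop:hyper} to return to $2$-norms, and the summation $\sum_j \Inf_j^{\leq d}[f_i]\leq d$. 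That part of the sketch is sound (modulo the fact that the statement, and your bound $\|a_i\|_2\leq 1$, implicitly require a normalization such as $\Var[f_i]\leq 1$ or range $[-1,1]$, without which the theorem is false by scaling).

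The genuine gap is Stage 1. Your claim that ``the tail decay $(1-\gamma)^d$ defeats $\eta^{-O(d)}$ once $d$ is large relative to $\gamma^{-1}\log(1/\alpha)$'' is quantitatively wrong: even a term with a single tail factor is bounded, by your route, by $\|r_a\|_2\prod_{i\neq a}\|\tilde f_i\|_{2(k-1)}\leq (1-\gamma)^{d}\,(2k-3)^{d(k-1)/2}$, and since $\gamma<1$ while the hypercontractive blow-up rate per unit of degree is $\Theta(k\log k)$, this bound \emph{grows} with $d$; no choice of $d$ rescues it. Worse, terms with two or more tail factors cannot be treated by \lref[Proposition]{prop:hyper} at all, because the tails $r_i=f_i^{>d}$ have unbounded degree, and on the Gaussian side the $f_i$ are no longer bounded, so sup-norm arguments are unavailable. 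The actual mechanism---both in Mossel's proof and in this paper's own soundness analysis---is different: the high-degree parts are controlled using the fact that the correlation of the underlying correlated space is bounded away from $1$ (\lref[Lemma]{lemma:corr_bound}) together with the Markov-operator/Efron--Stein contraction of \lref[Proposition]{prop:mossel_prop211} and \lref[Proposition]{prop:mossel_prop212}, which contributes a damping factor $\rho^{|T|}$ per Fourier level; this is precisely where $\alpha$ and the exponent $\gamma/\log(1/\alpha)$ in the final bound come from (your Stage 3 by itself would produce no $\gamma$-dependence). Alternatively, if one insists on winning a pure norm race, one must use \emph{different} truncation degrees for the different factors, growing super-exponentially so that each tail's $(1-\gamma)^{d_{a}}$ beats the blow-up $(2k)^{k d_{a-1}}$ of the already-truncated factors---which is exactly what the paper's \lref[Lemma]{lemma:lowdeg} does with the sequence $d_{j,i}=(d_{j,1})^{i}$, carried out \emph{before} Theorem~\ref{thm:invariance} is invoked, so that only low-degree functions ever reach the Gaussian step. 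As written, your Stage 1 does not close, and fixing it requires one of these two genuinely different ideas.
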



\section{Query efficient Dictatorship Test}
\label{section:test}
We are now ready to describe our dictatorship test. The test queries a function at $k$ locations and based on the $k$ bits received decides if the function is a dictator or far from it. The check on the received $k$ bits is based on a predicate with few accepting inputs which we describe next.

\subsection{The Predicate}
Let $k = 2^m-1$ for some $m > 2$. Let the coordinates of the predicate is indexed by elements of $\F_2^m\setminus \v{0}=: \{w_1, w_2, \ldots, w_{2^m-1}\}$. The Hadamard predicate $H_k$ has following satisfying assignments:

$$H_k = \{ x\in \{0,1\}^k | \exists a\in \F_2^m\setminus {\v 0} \mbox{ s.t } \forall i\in[k], x_i = a\cdot w_i \}$$
We will identify the set of satisfying assignments in $H_k$ with the variables $h_1, h_2, \ldots,h_k$.

Our final predicate $\P_k$ is the above predicate along with few more satisfying assignments. More precisely, we add all the assignments which are at a hamming distance at most $1$ from $0^k$ i.e. $\P_k = H_k\cup_{i=1}^{k} e_i\cup 0^k$. 

\subsection{The Distribution \texorpdfstring{$\cald_{k, \epsilon}$}{}}
\label{subsection:distribution}
For $0< \epsilon \leq \frac{1}{k^2}$, consider the following distribution $\cald_{k, \epsilon}$ on the set of satisfying assignments of $\P_k$ where $\alpha := (k-1)\epsilon$.
\begin{align*}
\text{Probabilities} & \hspace{20pt}\text{Assignments}\\
\cald_{k,\epsilon} & \leftarrow \left\{\begin{array}{cccc}
x_1 & x_2 & \cdots\cdots & x_{k}
\end{array}\right.\\
\frac{1}{1-\alpha}\left(\frac{1}{k+1} - \alpha\right)& \leftarrow \left\{\begin{array}{cccc}
0 & 0 & \cdots\cdots & 0
\end{array}\right.\\
\frac{1}{1-\alpha}\left(\frac{1}{k+1} - \epsilon\right) & \leftarrow \left\{ \begin{array}{cccc}
& & h_1 & \\
& & h_2 & \\
& & \vdots &\\
& & h_k &
\end{array} \right.\\
\frac{\epsilon}{1-\alpha} & \leftarrow\left\{ \begin{array}{cccc}
1 & 0 & \cdots\cdots & 0 \\
0 & 1 & \cdots\cdots & 0 \\
& & \vdots &\\
0 & 0 & \cdots\cdots & 1, \end{array} \right.
\end{align*}
where each $h_i$ gets a probability mass $\frac{1}{1-\alpha}(\frac{1}{k+1}-\epsilon)$ and each $e_i$ gets weight $\frac{\epsilon}{1-\alpha}$. The reasoning behind choosing this distribution is as follows: An uniform distribution on $H_k \cup 0^k$ has a property that it is uniform on every single co-ordinate and also pairwise independent. These two properties are very useful proving the soundness guarantee. One more property which we require is that the distribution has to be {\em connected}. In order to achieve this, we add $k$ extra assignment $\{e_1, e_2,\ldots, e_k\}$ and force the distribution to be supported on all $ H_k\cup_{i=1}^{k} e_i\cup 0^k$. Even though by adding extra assignments, we loose the pairwise independent property we make sure that the final distribution is {\em almost} pairwise independent.

We now list down the properties of this distribution which we will use in analyzing the dictatorship test.
\begin{observation}
\label{obs:dist prop}
The distribution $\cald_{k, \epsilon}$ above has the following properties:

\begin{enumerate}
    \item \label{prop:supp} $\cald_{k, \epsilon}$ is supported on $\P_k$.
    \item  \label{prop:uniform marginal} Marginal on every single coordinate is uniform.
    \item  \label{prop:low pairwise correlation} For $i\neq j$, covariance of two variables $x_i, x_j$ sampled form above distribution is: $\mathtt{Cov}[x_i, x_j] = -\frac{\epsilon}{2(1-\alpha)}$.
    \item \label{prop:corr bound} If we view $\calD_{k,\epsilon}$ as a joint distribution on space $\prod_{i=1}^k \calX^{(i)}$ where each $\calX^{(i)} = \{0,1\}$, then for all $i\in [k]$,
  $\rho\left(\calX^{(i)}, \prod_{j\in [k]\setminus \{i\}} \calX^{(j)}; \calD_{k,\epsilon}\right) \leq 1-\frac{\epsilon^2}{2(1-\alpha)^2}$.
\end{enumerate}
\end{observation}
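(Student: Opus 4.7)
The plan is to verify each item by direct computation, leaning on two standard counts over $\F_2^m$: for any nonzero $w \in \F_2^m$, exactly $2^{m-1}$ elements of $\F_2^m \setminus \{\v 0\}$ satisfy $a \cdot w = 1$; and for any two distinct nonzero $w, w'$, exactly $2^{m-2}$ elements of $\F_2^m \setminus \{\v 0\}$ satisfy $a \cdot w = a \cdot w' = 1$. Item~\ref{prop:supp} is immediate from the construction of $\calD_{k,\epsilon}$.

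For Item~\ref{prop:uniform marginal}, the contributions to $\Pr[x_i = 1]$ come from the $2^{m-1}$ codewords $h_\ell$ with $(h_\ell)_i = 1$, each of mass $\frac{1}{1-\alpha}\bigl(\frac{1}{k+1} - \epsilon\bigr)$, together with the single $e_i$ of mass $\frac{\epsilon}{1-\alpha}$. Using $k+1 = 2^m$ and $\alpha = (k-1)\epsilon$, this sum collapses to $\tfrac12$. (A parallel calculation confirms that the total mass of $\calD_{k,\epsilon}$ is $1$.) For Item~\ref{prop:low pairwise correlation}, only the $h_\ell$'s can contribute to $\E[x_i x_j] = \Pr[x_i = x_j = 1]$ (the atoms $0^k$ and $e_\ell$ each have at most one nonzero coordinate); by the second count there are $2^{m-2}$ such codewords, giving $\E[x_i x_j] = \frac{2^{m-2}}{1-\alpha}\bigl(\frac{1}{k+1} - \epsilon\bigr)$. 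Subtracting $\E[x_i]\E[x_j] = \tfrac14$ and simplifying with $k+1 = 2^m$ and $\alpha = (k-1)\epsilon$ yields $-\frac{\epsilon}{2(1-\alpha)}$.

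For Item~\ref{prop:corr bound}, I will apply \lref[Lemma]{lemma:corr_bound} to the bipartite correlated space $\bigl(\calX^{(i)}, \prod_{j \neq i} \calX^{(j)}\bigr)$. Two ingredients are needed. First, a lower bound on the smallest atom weight: each $e_\ell$ has mass exactly $\frac{\epsilon}{1-\alpha}$, and for $\epsilon$ in the permitted range this is the minimum across the three weight classes, so I may take $\alpha_0 := \frac{\epsilon}{1-\alpha}$ in the lemma. Second, connectivity of the induced bipartite graph: the right-side vertex $0^{k-1}$ is adjacent to both $0$ on the left (via the atom $0^k$) and $1$ on the left (via $e_i$); and every other right-side vertex with nonzero marginal arises as the projection of some $h_\ell$ or some $e_j$ with $j \neq i$, hence is adjacent to at least one of $\{0, 1\}$. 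The graph is therefore connected, and \lref[Lemma]{lemma:corr_bound} gives $\rho \leq 1 - \alpha_0^2 / 2 = 1 - \frac{\epsilon^2}{2(1-\alpha)^2}$. The only mildly delicate step is the bookkeeping that identifies $\frac{\epsilon}{1-\alpha}$ as the minimum atom weight in the relevant regime of $\epsilon$; the remainder of the argument is a transparent sum-of-atoms calculation.
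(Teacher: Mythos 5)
Your proposal takes essentially the same route as the paper: the same direct per-item computations, the same pairwise-independence argument for the covariance, and the same application of \lref[Lemma]{lemma:corr_bound} after establishing connectivity through the right-side vertex $0^{k-1}$ and taking $\frac{\epsilon}{1-\alpha}$ as the lower bound on the smallest atom. The only cosmetic difference is that you make the Hadamard-codeword counts $2^{m-1}$ and $2^{m-2}$ explicit, where the paper simply invokes uniformity and pairwise independence of the uniform distribution on $H_k \cup \{0^k\}$.
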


\begin{proof}
We prove each of the observations about the distribution. The first property is straight-forward. To prove ($2$), we compute $\E[x_i]$ as follows.
\begin{align*}
\E[x_i] &= (k+1) \cdot  \frac{1}{1-\alpha}\left(\frac{1}{k+1} - \epsilon\right) \cdot \frac{1}{2} + \frac{\epsilon}{1-\alpha}\\
&= \frac{1-\epsilon(k+1)+ 2\epsilon}{2(1-\alpha)}\\
&= \frac{1}{2}
\end{align*}

Consider the quantity $ \displaystyle \mathop{\E}_{\calD_{k,\epsilon}}[x_i x_j]$. If $x$ is sampled from $0$'s or $e_i$'s, the value is $0$. Moreover, we know that if it is sampled uniformly from $H_k\cup0^k$, it is $1/4$ because of pairwise independence and the above fact.  Therefore, we can write 
\[ \displaystyle \mathop{\E}_{\calD_{k,\epsilon}}[x_i x_j]= (k+1)\frac{1}{1-\alpha}\left(\frac{1}{k+1} - \epsilon\right) \frac{1}{4} \] 
We know that $\displaystyle \mathop{\E}_{\calD_{k,\epsilon}}[x_i] =  \displaystyle \mathop{\E}_{\calD_{k,\epsilon}}[x_j] = 1/2$. Therefore, 
\begin{align*}
\mathtt{Cov}[x_i, x_j] & =  \mathop{\E}_{\calD_{k,\epsilon}}[x_ix_j] -   \mathop{\E}_{\calD_{k,\epsilon}}[x_i]  \mathop{\E}_{\calD_{k,\epsilon}}[x_j]\\
& = \frac{1}{4(1-\alpha)} - \frac{\epsilon(k+1)}{4(1-\alpha)} - \frac{1}{4}\\
& =  \frac{-\epsilon}{2(1-\alpha)}
\end{align*}

To prove the last item, we first show that the bi-partite graph $G\left(\calX^{(i)}, \prod_{j\in [k]\setminus \{i\}} \calX^{(j)}, E\right)$ where $(a,b) \in \calX^{(i)} \times \prod_{j\in [k]\setminus \{i\}} \calX^{(j)}$ is an edge iff $\Pr(a,b)>0$, is connected. To see that the graph is connected, note that for both $0$ and $1$ on the left hand side, $0^{k-1}$ is a neighbor on the right hand side as the distribution's support includes $e_i$ for all $i$, and $0^k$. From the distribution, we see that the smallest atom is at least $\frac{\epsilon}{1-\alpha}$, since $\epsilon \leq 1/k^2$. We now use \lref[Lemma]{lemma:corr_bound} to get the required result.  
\end{proof}

\subsection{Dictatorship Test}
We will switch the notations from $\{0,1\}$ to $\{+1,-1\}$ where we identify $+1$ as $0$ and $-1$ as $1$. 
Let $f :\{-1,+1\}^n \rightarrow \{-1,+1\}$ be a given boolean function. We also assume that $f$ is folded i.e. for every $\v{x}\in \{-1,+1\}^n$, $f(\v{x}) = -f(-\v{x})$.
We think of $\P_k$ as a function $\P_k: \{-1,+1\}^k \rightarrow\{0,1\}$ such that $P_k(z)=1$ iff $z\in \P_k$. Consider the following dictatorship test:

\begin{tcolorbox}[colback=white]
{\bf Test $\calT_{k, \delta}$}
\begin{enumerate}
    \item Sample $\v{x}_1, \v{x}_2,\cdots, \v{x}_k \in \{-1,+1\}^n$ as follows:
  
    \begin{enumerate}
        \item For each $i\in [n]$, independently sample $((\v{x}_1)_i, (\v{x}_2)_i,\cdots, (\v{x}_k)_i)$ according to the distribution $\cald_{k,\delta}$.
        
    \end{enumerate}
    \item Check if $(f(\v{x}_1), f(\v{x}_2),\cdots, f(\v{x}_k)) \in \P_k$.
\end{enumerate}
\end{tcolorbox}

The final test distribution is basically the above test where the parameter $\delta$ is chosen from an appropriate distribution. For a given $\frac{1}{k^2}\geq\epsilon>0$, let  $\err = \frac{\epsilon/5}{2^k}$ and define the following quantities : $\epsilon_0 = \epsilon$ and for $j\geq 0$,  $\epsilon_{j+1} = \err \cdot 2^{- \left( \frac{k^{10}}{\err^3\epsilon_j}\right)^k}$. 

\newcommand{\numtest}{r}
\begin{tcolorbox}[colback=white]
{\bf Test $\calT'_{k, \epsilon}$}

\begin{center}

\begin{enumerate}
	\item Set $\numtest = \left(\frac{k}{\err}\right)^2$
    \item Select $j$ from $\{1,2,\ldots, \numtest\}$ uniformly at random.
	\item Set $\delta =\epsilon_j$
    \item Run test $\calT_{k, \delta}$.
\end{enumerate}
\end{center}
\end{tcolorbox}

We would like to make a remark that this particular setting of $\epsilon_{j+1}$ is not very important. For our analysis, we need a sequence of $\epsilon_j$'s such that each subsequent $\epsilon_j$ is sufficiently small compared to $\epsilon_{j-1}$.

\section{Analysis of the Dictatorship Test}
\label{section:analysis}

\paragraph{Notation:}
We can view $f : \{-1,+1\}^n \rightarrow \{-1,+1\}$ as a function over $n$-fold product set $\X_1\times\X_2\times\cdots\times \X_n$ where each $\X_i = \{-1,+1\}^{\{i\}}$. In the test distribution $\calT_{k, \delta}$, we can think of $\v{x}_i$ sampled from the product distribution on $\X_1^{(i)}\times\X_2^{(i)}\times\cdots\times \X_n^{(i)}$. With these notations in hand, the overall distribution on $(\v{x}_1, \v{x}_2,\cdots, \v{x}_k)$, from the test $\calT_{k, \delta}$,  is a $n$-fold product distribution from the space
$$ \prod_{j=1}^n \left( \prod_{i=1}^k \X_j^{(i)}\right).$$
where we think of $\prod_{i=1}^k \X_j^{(i)}$ as correlated space. We define the parameters for the sake of notational convenience:

\begin{enumerate}
\item $\beta_j := \frac{\epsilon_j}{1-(k-1)\epsilon_j}$ be the minimum probability of an atom in the distribution  $\calD_{k,\epsilon_j}$.
\item $s_{j+1} := \log(\frac{k}{\err})\frac{1}{\epsilon_j^2}$ and $S_j = s_{j+1}$ for $0\leq j\leq \numtest$.
\item $\alpha_j := (k-1)\epsilon_j$ for $j\in [\numtest]$,
\end{enumerate}

\subsection{Completeness}
Completeness is trivial, if $f$ is say $ith$ dictator then the test will be checking the following condition
$$ ((\v{x}_1)_i, (\v{x}_2)_i,\cdots, (\v{x}_k)_i) \in \P_k$$
Using \lref[Observation]{obs:dist prop}(\ref{prop:supp}), the distribution is supported on only strings in $\P_k$. Therefore, the test accepts with probability $1$.

\subsection{Soundness}

\begin{lemma}\label{lemma:soundness}
For every $\frac{1}{k^2}\geq \epsilon>0$ there exists $0< \tau< 1, d \in \mathbf{N}^+$ such that the following holds:
Suppose $f$ is such that for all $i\in [n]$,
$\inf_i^{\leq d}(f) \leq \tau$, then the test $\calT'_{k,\epsilon}$  accepts with probability at most $\frac{2k+1}{2^k} + \epsilon$.
(Note: One can take $\tau$ such that $\tau^{\Omega_k(\err/10s_\numtest\log(1/\beta_\numtest))} \leq \err$ and $d = \frac{\log(1/\tau)}{ \log (1/\beta_\numtest)}$.)
\end{lemma}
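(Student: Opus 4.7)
The plan is to Fourier expand the predicate and reduce the soundness bound to controlling a few correlation quantities. Writing $\P_k(z)=\frac{2k+1}{2^k}+\sum_{\emptyset\neq S\subseteq[k]}\hat{\P_k}(S)\prod_{i\in S}z_i$, the acceptance probability of $\calT'_{k,\epsilon}$ equals $\frac{2k+1}{2^k}+\sum_{\emptyset\neq S}\hat{\P_k}(S)\,E_S$, where $E_S := \E_j\,\E_{\v x\sim\cald_{k,\epsilon_j}^{\otimes n}}[\prod_{i\in S} f(\v x_i)]$. Since $|\hat{\P_k}(S)|\le 1$ and there are $2^k-1$ nonempty sets, it suffices to show $|E_S|\le\epsilon/2^k$ for each. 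The case $|S|=1$ is immediate: by Observation~\ref{obs:dist prop}(\ref{prop:uniform marginal}) the marginal on every coordinate is uniform, so $\E[f(\v x_i)] = \E_{\{\pm1\}^n}[f]=0$ by folding. The real work is the case $|S|\ge 2$.

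Fix such an $S$. For a fixed $j$, decompose $f=f^{\le s_j}+f^{(s_j,S_j]}+f^{> S_j}$ and treat the three pieces separately. For the tail $f^{> S_j}$, Observation~\ref{obs:dist prop}(\ref{prop:corr bound}) yields correlation $\rho_j\le 1-\beta_j^2/2$; applying Proposition~\ref{prop:mossel_prop212} to the Markov operator that conditions on one of the $k$ coordinates given the rest, the contribution of $f^{> S_j}$ inside $\prod_{i\in S} f(\v x_i)$ is bounded by $\rho_j^{S_j}\le\err$ by the choice $S_j=\log(k/\err)/\epsilon_j^2$. For the middle layer $f^{(s_j,S_j]}$, I exploit the random choice of $j$: because $\epsilon_{j+1}$ is chosen far smaller than $\epsilon_j$, the bad intervals $(s_j,S_j]$ are pairwise disjoint across $j\in[\numtest]$, so $\sum_j\|f^{(s_j,S_j]}\|_2^2\le\|f\|_2^2\le 1$, and hence $\E_j\|f^{(s_j,S_j]}\|_2^2\le 1/\numtest=(\err/k)^2$. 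This leaves only the low-degree part $\tilde f:=f^{\le s_j}$.

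For $\tilde f$ I apply the invariance principle of Theorem~\ref{thm:invariance}: $\tilde f$ inherits $(d,\tau)$-quasirandomness from $f$, and $\cald_{k,\epsilon_j}$ satisfies the required uniform-marginal and smallest-atom-$\beta_j$ hypotheses. This replaces $\E_{\cald_{k,\epsilon_j}^{\otimes n}}[\prod_{i\in S}\tilde f(\v x_i)]$ by $\E_{\nu_j^{\otimes n}}[\prod_{i\in S}\tilde f(\v g_i)]$ up to an additive $\err$, where $\nu_j$ is the jointly Gaussian distribution matching the first and second moments of $\cald_{k,\epsilon_j}$. By Observation~\ref{obs:dist prop}(\ref{prop:uniform marginal}),(\ref{prop:low pairwise correlation}), each $\v g_i$ is standard normal and any two have pairwise covariance $-\epsilon_j/(2(1-\alpha_j))$, small but nonzero—this is exactly where the lack of pairwise independence of $\cald_{k,\epsilon_j}$ surfaces.

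The final and most delicate step is to decouple the weakly correlated Gaussians. The plan is a Gaussian interpolation argument: continuously slide the off-diagonal entries of the covariance matrix to zero and differentiate along the path. Each derivative produces a second-order mixed partial of $\tilde f$ evaluated at two of the $\v g$'s; since $\tilde f$ has degree at most $s_j$, these partials can be controlled in $L^2$ using hypercontractivity (Proposition~\ref{prop:hyper} with the parameters of Theorem~\ref{thm:hc param}), and the total drift is at most $O(k^2 s_j\,\epsilon_j)$, which the rapid decay of $\epsilon_j$ together with $s_j=\log(k/\err)/\epsilon_{j-1}^2$ absorbs into the error budget. At the fully independent endpoint, $\E[\prod_{i\in S}\tilde f(\v h_i)]=\prod_{i\in S}\E[\tilde f(\v h_i)]=0$ since folding passes to $\tilde f$. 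Combining the three pieces of the decomposition, summing over the $2^k-1$ nonempty $S$'s, and averaging over $j$ yields the claimed bound $\frac{2k+1}{2^k}+\epsilon$. I expect this last perturbation step to be the main obstacle: it is precisely where the novel feature of the test—its reliance on a merely \emph{almost}-pairwise-independent distribution—has to be paid for quantitatively, and it is the place that forces both the family-of-tests construction and the rapid-decay schedule for $\epsilon_j$.
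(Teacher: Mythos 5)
Your overall strategy matches the paper: expand $\P_k$ in Fourier basis, kill the $|S|=1$ terms by folding and uniform marginals, reduce $|S|\ge 2$ terms to a low-degree function, invoke invariance to move to jointly Gaussian inputs, and then decouple the nearly-independent Gaussians. The Gaussian-interpolation proposal for the last step (sliding the off-diagonal covariance to $0$ and differentiating) is a legitimate alternative to the paper's explicit coupling $\v g = {\bf M}\v h$ followed by the $L^2$-perturbation Lemma~\ref{lemma:gclose}; both buy the same thing and cost roughly the same.

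However, there is a real gap in the first reduction. You propose to split $f = f^{\le s_j} + f^{(s_j,S_j]} + f^{>S_j}$ and to control the contribution of the tail and of the middle band separately. To turn this into a bound on
\[
\Bigl|\E_{\cald_{k,\epsilon_j}^{\otimes n}}\Bigl[\prod_{i\in S}f(\v x_i)\Bigr]-\E_{\cald_{k,\epsilon_j}^{\otimes n}}\Bigl[\prod_{i\in S}f^{\le s_j}(\v x_i)\Bigr]\Bigr|
\]
you need a hybrid argument swapping one slot at a time. After extracting the swapped slot via the Markov operator and applying Cauchy--Schwarz, you are left with $\|U(f^{>s_j})\|_2$ times the $L^2$ norm of a product of the remaining functions. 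That product now contains copies of $f^{\le s_j}$, which is no longer bounded in $[-1,1]$; Hölder and hypercontractivity give $\|f^{\le s_j}\|_{2(t-1)}\le (2t)^{s_j}$, so the residual factor is roughly $(2k)^{k s_j}$. This multiplies the middle-band mass $\bigl(\sum_{s_j<|T|<S_j}\hat f(T)^2\bigr)^{1/2}$, and because $s_j$ grows with $j$ the blowup does \emph{not} average out over $j\in[\numtest]$: if $f$'s Fourier weight sits entirely in one band $(s_{j_0},S_{j_0})$ you pay $(2k)^{k s_{j_0}}/\numtest$, which is unbounded. The paper sidesteps exactly this: in the first hybrid (\lref[Lemma]{lemma:lowdeg}, the $\Gamma_1$ estimate) it replaces $f$ by $T_{1-\gamma_j}f$, which still has range in $[-1,1]$, so $\|F\|_2\le 1$ and no hypercontractive factor appears against the middle band. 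Only afterwards does it truncate, to $(T_{1-\gamma_j}f)^{\le d_{j,i}}$ with an \emph{increasing} degree sequence $d_{j,i}=(d_{j,1})^i$, and there the exponentially small $\|T_{1-\gamma_j}f^{>d_{j,a}}\|_2\le(1-\gamma_j)^{d_{j,a}}$ absorbs the $(2t)^{d_{j,i}}$ blowups. This two-stage noise-then-truncate mechanism is the technical heart of \lref[Lemma]{lemma:lowdeg}, and skipping it breaks the argument.

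A secondary imprecision: in the decoupling step you claim the drift is $O(k^2 s_j\epsilon_j)$, treating the degree as $s_j$. Even inside the paper's scheme the relevant degree is $d_{j,t}\gg s_j$, and the drift bound must carry a hypercontractive factor of order $(2k)^{O(kD)}$ (as in \lref[Lemma]{lemma:indep_gaussian}'s $\delta\cdot(2k)^{2kD}$). This is not fatal---the double-exponential decay $\epsilon_{j+1}=\err\cdot 2^{-(k^{10}/(\err^3\epsilon_j))^k}$ is calibrated exactly to swallow it---but your stated estimate understates what the rapid-decay schedule actually has to pay for.
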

\begin{proof}
The acceptance probability of the test is given by the following expression:
\begin{align*}
    \Pr[\mbox{Test accepts } f] &= \E_{\calT'_{k, \epsilon}} [\P_k(f(\v{x}_1), f(\v{x}_2),\cdots, f(\v{x}_k))]
\end{align*}
 After expanding $P_k$ in terms of its Fourier expansion, we get
 \begin{align*}   
  \Pr[\mbox{Test accepts } f]  & = \frac{2k+1}{2^k} + \E_{\calT'_{k, \epsilon}}\left[\sum_{S\subseteq [k], S\neq \emptyset}\hat{\P_k}(S) \prod_{i\in S}f(\v{x}_i)\right]\\
        & = \frac{2k+1}{2^k} + \sum_{S\subseteq [k], S\neq \emptyset}\hat{\P_k}(S) \E_{\calT'_{k, \epsilon}}\left[\prod_{i\in S}f(\v{x}_i)\right]\\
        & \leq \frac{2k+1}{2^k} + \sum_{S\subseteq [k], S\neq \emptyset} \left|\E_{\calT'_{k, \epsilon}}\left[\prod_{i\in S}f(\v{x}_i)\right]\right| \tag{$|\hat{\P_k}(S)|\leq 1$}\\
             & = \frac{2k+1}{2^k} + \sum_{S\subseteq [k], |S|\geq 2} \left|\E_{\calT'_{k, \epsilon}}\left[\prod_{i\in S}f(\v{x}_i)\right]\right|.
\end{align*}
In the last equality, we used the fact that each $\v{x}_i$ is distributed uniformly in $\{-1,+1\}^n$ and hence when $S=\{i\}$, $\E[f(\v{x}_i)] = \hat{f}(\emptyset) =0$. Thus, to prove the lemma it is enough to show that for all $S\subseteq [k]$ such that $|S|\geq 2$,  $\E\left[\prod_{i\in S}f(\v{x}_i)\right]\leq \frac{\epsilon}{2^k}$. This follows from \lref[Lemma]{lemma:main}.
\end{proof}

\begin{lemma}
\label{lemma:main}
For any $S\subseteq[k]$ such that $|S|\geq 2$,

$$ \left|\E_{j\in[\numtest]} \left[ \E_{\cald_{k,\epsilon_j}^{\otimes n}}\left[ \prod_{i\in S}f(\v{x}_i)\right]\right]\right| \leq \frac{\epsilon}{2^k}$$
\end{lemma}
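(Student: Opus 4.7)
Fix $S \subseteq [k]$ with $|S| \geq 2$, and for each $j\in[\numtest]$ set $E_j := \E_{\cald_{k,\epsilon_j}^{\otimes n}}[\prod_{i\in S} f(\v{x}_i)]$. The idea is to prove $|E_j|$ is small for most $j$ and absorb the remaining $j$'s by the trivial bound $|E_j| \leq 1$. Define the middle Fourier mass $M_j := \sum_{T\,:\,s_j < |T| \leq S_j} \widehat{f}(T)^2$. Because $S_j = s_{j+1}$, the $\numtest$ intervals $(s_j, S_j]$ are pairwise disjoint, so Parseval gives $\sum_{j=1}^{\numtest} M_j \leq 1$. Call $j$ \emph{bad} if $M_j > \err$: then $|J_{\mathrm{bad}}| \leq 1/\err$, so its density in $[\numtest]$ is at most $1/(\numtest \cdot \err) = \err/k^2$, contributing far less than $\epsilon/2^{k+1}$ to the outer average even with the trivial bound.

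\textbf{Handling a good $j$.} Fix $j \notin J_{\mathrm{bad}}$ and decompose $f = f^{\leq s_j} + f^{\mathrm{mid}} + f^{>S_j}$. \lref[Observation]{obs:dist prop}(\ref{prop:corr bound}) combined with \lref[Lemma]{lemma:corr_bound} gives correlation $\rho_j \leq 1 - \Omega(\epsilon_j^2)$ between any one coordinate and the remaining $k-1$ coordinates under $\cald_{k,\epsilon_j}$. A standard $k$-step hybrid argument replaces each factor $f(\v{x}_i)$ by $f^{\leq S_j}(\v{x}_i)$ in turn; the tail swap at each step is controlled by $\rho_j^{S_j}$ via \lref[Proposition]{prop:mossel_prop211}--\lref[Proposition]{prop:mossel_prop212} applied to the Markov operator from $\v{x}_i$ to the other coordinates, totaling $O(k)\,\rho_j^{S_j} \leq \err/4^k$ by $S_j = \log(k/\err)/\epsilon_j^2$. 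In the same hybrid, swapping in $f^{\mathrm{mid}}$ costs at most $O(k\sqrt{M_j}) = O(k\sqrt{\err})$ by Cauchy--Schwarz. Since $f^{\leq s_j}$ inherits $(d,\tau)$-quasirandomness from $f$ and has degree at most $s_j$, \lref[Theorem]{thm:invariance} applied with $\alpha = \beta_j$ replaces $\cald_{k,\epsilon_j}^{\otimes n}$ by the matching-moment Gaussian $\nu_j^{\otimes n}$ at additional cost $\err$, reducing everything to $\widetilde E_j := \E_{\nu_j^{\otimes n}}[\prod_{i\in S} f^{\leq s_j}(\v{g}_i)]$ up to $O(k\sqrt{\err})$.

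\textbf{Gaussian perturbation and conclusion.} The one-coordinate marginals of $\nu_j$ are standard Gaussians (\lref[Observation]{obs:dist prop}(\ref{prop:uniform marginal})), while pairwise covariances are only $-\epsilon_j/(2(1-\alpha_j)) = O(\epsilon_j)$ (\lref[Observation]{obs:dist prop}(\ref{prop:low pairwise correlation})). Let $\v{h}_1,\ldots,\v{h}_k$ be jointly independent Gaussians with the same marginals. Linearly interpolating the covariance matrices from $\nu_j$ down to independent, Gaussian integration by parts (Stein's lemma) expresses the derivative of the interpolated expectation as a sum of expectations of second partials of $f^{\leq s_j}$; \lref[Proposition]{prop:hyper} then bounds these partials of the degree-$s_j$ polynomial, yielding $|\widetilde E_j - \E[\prod_{i\in S} f^{\leq s_j}(\v{h}_i)]| \leq O(\epsilon_j)\cdot(Cs_j)^{O(k)}$, which is at most $\err$ by the tower-type recurrence $\epsilon_{j+1} = \err \cdot 2^{-(k^{10}/(\err^3 \epsilon_j))^k}$ defining the sequence. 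Folding of $f$ gives $\E[f^{\leq s_j}(\v{h}_i)] = \widehat f(\emptyset) = 0$, so by independence $\E[\prod_i f^{\leq s_j}(\v{h}_i)] = 0$; hence $|E_j| = O(\err)$ for every good $j$, and averaging over $j$ yields $|\E_{j\in[\numtest]}[E_j]| = O(\err) \leq \epsilon/2^k$. The main obstacle is precisely this Gaussian perturbation step: the degree $s_j$ grows like $1/\epsilon_{j-1}^2$, so $\epsilon_j$ must be made super-exponentially smaller than $\epsilon_{j-1}$ to keep $\epsilon_j \cdot \mathrm{poly}(s_j)^k$ negligible, explaining exactly why the $\epsilon_j$ recurrence in the test is so aggressive.
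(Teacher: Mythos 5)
Your high-level scaffolding matches the paper's --- correlation to control the high-degree tail, invariance principle, Gaussian decorrelation, folding to finish --- and the good/bad-$j$ split via Markov's inequality is an acceptable substitute for the paper's final Cauchy--Schwarz over $j$. But the low-degree truncation step has a genuine gap, and it is precisely the gap that the paper's use of the noise operator $T_{1-\gamma_j}$ is designed to close.

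You propose a $k$-step hybrid that replaces each $f(\v{x}_i)$ by a sharp truncation ($f^{\leq s_j}$ or $f^{\leq S_j}$), controlling the tail swap by $\rho_j^{S_j}$ via Propositions~\ref{prop:mossel_prop211}--\ref{prop:mossel_prop212} and the middle swap by Cauchy--Schwarz. Each such bound has the form $|\E[G\cdot F]|\leq \|F\|_2\cdot\sqrt{\sum_T\|G_T\|_2^2}$, and it needs $\|F\|_2\leq 1$, where $F$ is the product of the remaining $t-1$ factors. That holds at the first hybrid step (all factors are still $f$, which is $[-1,1]$-valued), but from the second step onward $F$ contains already-replaced factors $f^{\leq S_j}$, which are \emph{not} $\ell_\infty$-bounded. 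By H\"older and hypercontractivity, $\|f^{\leq S_j}\|_{2(t-1)}$ can be as large as $(2t)^{S_j/2}$, so $\|F\|_2$ can reach roughly $(2k)^{kS_j/2}$. Since $S_j=\log(k/\err)/\epsilon_j^2$, this blowup overwhelms the gain $\rho_j^{S_j}\approx\err/k$ by a super-astronomical margin, so the claimed bound ``totaling $O(k)\rho_j^{S_j}$'' is not obtained. (Truncating at the lower endpoint $s_j=\log(k/\err)/\epsilon_{j-1}^2$ instead fails for a different reason: $\rho_j^{s_j}=(1-\epsilon_j^2)^{s_j}\approx 1$ because $\epsilon_j\ll\epsilon_{j-1}$, so the tail term is not small to begin with.) The paper avoids this exactly by first passing to $T_{1-\gamma_j}f$, which is still $[-1,1]$-valued (so the Markov/Cauchy--Schwarz step in $\Gamma_1$ is sound), and then truncating at an increasing sequence of degrees $d_{j,1}\ll d_{j,2}\ll\cdots$ so that the noise decay $\|T_{1-\gamma_j}f^{>d_{j,a}}\|_2\leq(1-\gamma_j)^{d_{j,a}}$ cancels the hypercontractive growth $(2k)^{k\,d_{j,a-1}}$ from the already-truncated factors (the $\Gamma_2$ computation). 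Without a noise operator there is no uniform bound on $\|f^{>d}\|_2$ for Boolean $f$, and no ordering of the hybrid avoids the unboundedness of the truncated factors, so the omission is fatal.

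A secondary, less serious point: in the Gaussian decorrelation step you assert a cost $O(\epsilon_j)(Cs_j)^{O(k)}$, i.e. polynomial in the degree. The interpolation derivative involves a $k$-fold product of degree-$s_j$ polynomials whose moments still require hypercontractivity, so the true dependence is exponential in the degree, as in the paper's Lemma~\ref{lemma:indep_gaussian} bound $\delta\cdot(2k)^{2kD}$. Your tower-type $\epsilon_j$ recurrence absorbs this regardless, so this is only a quantitative slip and, unlike the truncation issue, would not sink the argument on its own.
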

The proof of this follows from the following Lemmas ~\ref{lemma:lowdeg}
, \ref{lemma:invariance}, \ref{lemma:indep_gaussian}.

\begin{lemma}
\label{lemma:lowdeg}
For any $j\in [\numtest]$ and for any $S\subseteq[k]$, $|S|\geq 2$ such that $S=\{\ell_1, \ell_2,\ldots,\ell_t\}$,

$$ \left|  \E_{\cald_{k,\epsilon_j}^{\otimes n}}\left[ \prod_{\ell_i\in S}f(\v{x}_{\ell_i})\right] -  \E_{\cald_{k,\epsilon_j}^{\otimes n}}\left[ \prod_{\ell_i\in S}(T_{1-\gamma_j}f)^{\leq {d_{j,i}}}(\v{x}_{\ell_i})\right] \right| \leq 2\cdot \err + k\sqrt{\sum_{s_j \leq |T|\leq S_j} \hat{f}(T)^2}.$$
where $\gamma_j = \frac{\err}{ks_j}$ and $d_{j,i}$ is a sequence given by  $d_{j,1} =  \frac{2k^2\cdot s_j}{\err}\log\left(\frac{k}{\err}\right)$ and $d_{j,i}= (d_{j,1})^i$ for $1<i\leq t$. 
\end{lemma}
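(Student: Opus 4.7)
The plan is a two-stage hybrid argument. Stage~$1$ replaces each factor $f(\v{x}_{\ell_i})$ with the noised version $T_{1-\gamma_j}f(\v{x}_{\ell_i})$ one at a time; Stage~$2$ then replaces each noised factor with its degree-truncation $(T_{1-\gamma_j}f)^{\le d_{j,i}}(\v{x}_{\ell_i})$. Telescoping, the total error is a sum of at most $2t \le 2k$ single-replacement contributions, which I want to bound so that Stage~$1$ produces $\err + k\sqrt{\sum_{s_j\le|T|\le S_j}\hat f(T)^2}$ and Stage~$2$ produces $\err$.

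For Stage~$1$, the $i$-th telescoping error is $\bigl|\E\bigl[A_i' \cdot (I-T_{1-\gamma_j})f(\v{x}_{\ell_i}) \cdot B_i'\bigr]\bigr|$, where $A_i' = \prod_{r<i}T_{1-\gamma_j}f(\v{x}_{\ell_r})$ and $B_i' = \prod_{r>i}f(\v{x}_{\ell_r})$. Both $f$ and $T_{1-\gamma_j}f$ are pointwise bounded by $1$, so $\|A_i' B_i'\|_2 \le 1$. I decompose $(I-T_{1-\gamma_j})f$ by Fourier support into three pieces---degrees $<s_j$, degrees in $[s_j,S_j]$, and degrees $>S_j$---whose $L^2$ norms are at most $\gamma_j s_j = \err/k$ (using $1-(1-\gamma_j)^{|T|} \le \gamma_j|T|$), $\sqrt{\sum_{s_j\le|T|\le S_j}\hat f(T)^2}$, and $1$ respectively. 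The first two pieces are controlled directly by Cauchy--Schwarz against $\|A_i'B_i'\|_2 \le 1$. For the third piece, Proposition~\ref{prop:mossel_prop212} combined with Observation~\ref{obs:dist prop}(\ref{prop:corr bound}) (which gives $\rho \le 1-\beta_j^2/2$ at the single-coordinate cut at $\ell_i$) and the choice of $S_j$ of order $\log(k/\err)/\beta_j^2$ damps the effective norm to $\rho^{S_j} \le \err/k$. Summing over $i \le t \le k$ yields the claimed Stage~$1$ bound.

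For Stage~$2$, the $i$-th telescoping error is $\bigl|\E\bigl[A_i'' \cdot (T_{1-\gamma_j}f)^{>d_{j,i}}(\v{x}_{\ell_i}) \cdot B_i''\bigr]\bigr|$, where now $A_i'' = \prod_{r<i}(T_{1-\gamma_j}f)^{\le d_{j,r}}(\v{x}_{\ell_r})$ is an unbounded low-degree prefix and $B_i'' = \prod_{r>i}T_{1-\gamma_j}f(\v{x}_{\ell_r})$ is pointwise bounded by $1$. Cauchy--Schwarz gives $\|A_i''B_i''\|_2 \le \|A_i''\|_2$, and H\"older's inequality together with the hypercontractivity bound of Proposition~\ref{prop:hyper} applied factor-by-factor to $A_i''$ yields $\|A_i''\|_2 \le (2t)^{d_{j,i-1}}$, while $\|(T_{1-\gamma_j}f)^{>d_{j,i}}\|_2 \le (1-\gamma_j)^{d_{j,i}}$. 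The identity $\gamma_j d_{j,1} = 2k\log(k/\err)$ together with the escalating schedule $d_{j,i} = d_{j,1}^i$ gives $(1-\gamma_j)^{d_{j,i}} \cdot (2t)^{d_{j,i-1}} \le \bigl[(\err/k)^{2k}(2t)\bigr]^{d_{j,i-1}}$, which is much smaller than $\err/k^2$, so Stage~$2$ contributes at most $\err$ overall.

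The main obstacle will be the delicate parameter bookkeeping: verifying that $S_j$ is large enough that $\rho^{S_j}$ falls below $\err/k$, so that the sum of $t$ top-piece contributions in Stage~$1$ stays within $\err$, and verifying that the growth of $d_{j,i} = d_{j,1}^i$ is rapid enough for the tail factor $(1-\gamma_j)^{d_{j,i}}$ to defeat the hypercontractive blowup $(2t)^{d_{j,i-1}}$ at every Stage~$2$ step. The two-stage structure is essential: a one-shot hybrid would require pairing the middle-Fourier $L^2$ bound $\sqrt{\sum\hat f(T)^2}$ against the blown-up prefix $\|A_i''\|_2$, for which direct Cauchy--Schwarz is hopeless---separating the stages confines the middle-Fourier contribution to Stage~$1$, where the prefix $A_i'$ is genuinely $L^\infty$-bounded.
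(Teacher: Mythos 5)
Your proposal follows essentially the same two-stage hybrid argument as the paper: first replace each factor $f$ by its noisy version $T_{1-\gamma_j}f$ one at a time, then truncate to low degree one at a time, using the Markov operator with Propositions~\ref{prop:mossel_prop211} and \ref{prop:mossel_prop212} to damp the high-Fourier-degree part of the noising error, and H\"older plus hypercontractivity (Proposition~\ref{prop:hyper}) to control the truncation error against the unbounded low-degree prefix, with exactly the same role for the middle Fourier interval $(s_j,S_j)$ and the escalating degree schedule $d_{j,i}=d_{j,1}^i$. Two small bookkeeping slips do not affect the conclusion: treating the low-degree and high-degree pieces of $(I-T_{1-\gamma_j})f$ by separate applications of Cauchy--Schwarz gives roughly $2\err/k$ per telescoping step rather than $\err/k$ (the paper folds both $\ell_2$ masses under a single square root before extracting $\err/k$), so your Stage~1 actually yields $2\err + k\sqrt{\cdot}$ and the total becomes $3\err + k\sqrt{\cdot}$; and the hypercontractive bound on the prefix should be $\|A_i''\|_2 \le (2t)^{\sum_{r<i}d_{j,r}}$, which is of order $(2t)^{k\,d_{j,i-1}}$ rather than your stated $(2t)^{d_{j,i-1}}$. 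Both discrepancies are harmless because the slack built into $\err$ and the definition of $d_{j,1}$ absorbs them.
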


\begin{lemma}
\label{lemma:invariance}
Let $j\in [\numtest]$ and $\nu_j$ be a distribution on jointly distributed standard Gaussian variables with same covariance matrix as that of $\cald_{k,\epsilon_j}$. Then for any $S\subseteq[k]$, $|S|\geq 2$ such that $S=\{\ell_1, \ell_2,\ldots,\ell_t\}$,

$$ \left|  \E_{\cald_{k,\epsilon_j}^{\otimes n}}\left[ \prod_{\ell_i\in S}(T_{1-\gamma_j}f)^{\leq d_{j,i}}(\v{x}_{\ell_i})\right] - \E_{(\v{g}_1, \v{g}_2,\ldots, \v{g}_k)\sim \nu_j^{\otimes n}}\left[ \prod_{\ell_i\in S}(T_{1-\gamma_j}f)^{\leq d_{j,i}}(\v{g}_i)\right]   \right| \leq \err_2$$
where $d_{j,i}$ from \lref[Lemma]{lemma:lowdeg} and $\err_2 = \tau^{\Omega_k(\gamma_j/\log(1/\beta_j))}$ 
(Note: $\Omega(.)$ hides a constant depending on $k$).
\end{lemma}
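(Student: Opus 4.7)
The plan is to derive this as a direct application of the invariance principle, \lref[Theorem]{thm:invariance}, with the ambient distribution taken to be $\cald_{k,\epsilon_j}$, the target Gaussian distribution $\nu_j$, and the input functions taken to be $g_{\ell_i} := (T_{1-\gamma_j}f)^{\leq d_{j,i}}$ for each $\ell_i \in S$, padding the remaining indices in $[k]\setminus S$ with the constant function $1$ (which is trivially quasirandom, has zero tail variance, and leaves the product expectation unchanged). By \lref[Observation]{obs:dist prop}, each marginal of $\cald_{k,\epsilon_j}$ is mean zero and the smallest atom has weight $\beta_j$, so the distributional hypotheses of \lref[Theorem]{thm:invariance} hold with $\alpha := \beta_j$.

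Next I would verify the two conditions on the polynomials. For quasirandomness, since the noise operator only shrinks Fourier coefficients, for every coordinate $\ell \in [n]$,
$$ \Inf_\ell^{\leq d}[g_{\ell_i}] \;=\; \sum_{\ell \in T,\, |T| \leq d} (1-\gamma_j)^{2|T|}\, \hat{f}(T)^2 \;\leq\; \Inf_\ell^{\leq d}[f] \;\leq\; \tau,$$
by the quasirandomness hypothesis on $f$ imported from \lref[Lemma]{lemma:soundness}. For the tail condition, since $g_{\ell_i}$ has degree at most $d_{j,i}$, the Fourier coefficients of $g_{\ell_i}^{>d}$ live at levels $|T| > d$ and carry a damping factor $(1-\gamma_j)^{|T|}$, so
$$ \Var\bigl[(g_{\ell_i})^{>d}\bigr] \;\leq\; (1-\gamma_j)^{2(d+1)} \sum_T \hat{f}(T)^2 \;\leq\; (1-\gamma_j)^{2d},$$
using $\|f\|_2 \leq 1$. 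Both hypotheses therefore hold uniformly in $\ell_i \in S$ with $\gamma = \gamma_j$.

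Applying \lref[Theorem]{thm:invariance} then yields error bounded by $\tau^{\Omega_k(\gamma_j/\log(1/\beta_j))} = \err_2$, which is exactly the claimed bound. The only subtle point---more of a bookkeeping obligation than a real obstacle---is verifying that the $d$ and $\tau$ fixed at the top of \lref[Lemma]{lemma:soundness} dominate the parameters that invariance requires at the specific level $j$; concretely, one needs $d \geq \log(1/\tau)/\log(1/\beta_j)$ for every $j \in [\numtest]$, which is precisely why the outer lemma picks $\tau$ and $d$ relative to the worst-case atom weight $\beta_\numtest$.
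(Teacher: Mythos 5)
Your proposal is correct and follows essentially the same route as the paper: verify that $T_{1-\gamma_j}f$ (and its truncations) inherits $(d,\tau)$-quasirandomness from $f$, bound $\Var$ of the high-degree tail by $(1-\gamma_j)^{2d}$, and invoke Theorem~\ref{thm:invariance} directly. You are in fact a bit more careful than the paper in explicitly handling the truncation at $d_{j,i}$ and in flagging the need for the outer lemma's $d,\tau$ to dominate the worst-case $\beta_j$.
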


\begin{lemma}
\label{lemma:indep_gaussian}
Let $k\geq 2$ and $S\subseteq [k]$ such that $|S|\geq 2$ and let $f: \R^n \rightarrow \R$ be a multilinear polynomial of degree $D\geq 1$ such that $\|f\|_2\leq 1$. If $\mathcal{G}$ be a joint distribution on $k$ standard gaussian random variable with a covariance matrix $(1+\delta){\bf I} - \delta {\bf J}$ and $\mathcal{H}$ be a distribution on $k$ independent standard gaussian then it holds that
$$ \left| \E_{\mathcal{G}^{\otimes n}}\left[ \prod_{i\in S}f(\v{g}_i)\right] - \E_{\mathcal{H}^{\otimes n}}\left[ \prod_{i\in S}f(\v{h}_i)\right]   \right| \leq  \delta  \cdot (2k)^{2kD}$$
\end{lemma}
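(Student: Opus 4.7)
I would prove this by Gaussian interpolation along the family of covariances $\Sigma_t := (1-t)\mathbf{I} + t\Sigma$ for $t \in [0,1]$, where $\Sigma = (1+\delta)\mathbf{I} - \delta\mathbf{J}$. Let $\mathcal{G}_t$ denote the centered joint Gaussian on $\R^k$ with covariance $\Sigma_t$, so that $\mathcal{G}_0 = \mathcal{H}$ and $\mathcal{G}_1 = \mathcal{G}$. Setting $F(t) := \E_{\mathcal{G}_t^{\otimes n}}\!\bigl[\prod_{i \in S} f(\v{g}_i)\bigr]$, the quantity to bound is $|F(1) - F(0)| \leq \int_0^1 |F'(t)|\,dt$. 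A key feature is that $\Sigma_t$ has diagonal entries equal to $1$ for every $t$, so each marginal $\v{g}_i$ is a standard Gaussian vector in $\R^n$; this is what will license Gaussian hypercontractivity uniformly in $t$.

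Next I would compute $F'(t)$ via the standard Gaussian interpolation identity (most easily verified via characteristic functions): for a polynomial $\Phi$ evaluated on a Gaussian with covariance $\Sigma_t \otimes \mathbf{I}_n$,
$$\frac{d}{dt}\E[\Phi(\v{g})] \;=\; \frac{1}{2}\sum_{(a,b),(a',b')} \dot{\Sigma}_t\bigl((a,b),(a',b')\bigr)\,\E\bigl[\partial_{(a,b)}\partial_{(a',b')}\Phi(\v{g})\bigr].$$
Since $\dot{\Sigma}_t((a,b),(a',b')) = -\delta\,\mathbf{1}_{a \neq a'}\mathbf{1}_{b = b'}$, and for $a \neq a'$ in $S$ the second partial $\partial_{(a,b)}\partial_{(a',b)}\prod_{i \in S} f(\v{g}_i)$ equals $(\partial_b f)(\v{g}_a)(\partial_b f)(\v{g}_{a'})\prod_{i \in S \setminus \{a,a'\}} f(\v{g}_i)$, this yields
$$F'(t) \;=\; -\frac{\delta}{2}\sum_{\substack{a \neq a' \\ a,a' \in S}}\sum_{b \in [n]}\E_{\mathcal{G}_t^{\otimes n}}\!\Bigl[(\partial_b f)(\v{g}_a)(\partial_b f)(\v{g}_{a'})\!\!\prod_{i \in S \setminus \{a,a'\}}\!\! f(\v{g}_i)\Bigr].$$

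To bound each summand I would apply generalized H\"older's inequality with exponent $|S|$ on the $|S|$ marginal factors and then invoke Gaussian hypercontractivity. Combining \lref[Theorem]{thm:hc param} with \lref[Proposition]{prop:hyper} gives, for a multilinear polynomial $Q$ of degree $e$ and a standard Gaussian $\v{g}$, the bound $\|Q(\v{g})\|_{|S|} \leq (|S|-1)^{e/2}\|Q\|_2$. Applying this to $f$ (degree $D$) and to $\partial_b f$ (degree $D-1$), together with the Parseval-type identity $\sum_{b}\|\partial_b f\|_2^2 = \sum_T |T|\,\hat{f}(T)^2 \leq D\|f\|_2^2 \leq D$ and a Cauchy--Schwarz across $b$, produces a uniform bound of the form $|F'(t)| \leq \delta\cdot (2k)^{2kD}$ after absorbing polynomial prefactors into the exponent; integrating over $t\in[0,1]$ then gives the claim. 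The only subtle point is keeping the bound independent of $n$: this is handled precisely by the Parseval--derivative identity, which collapses the sum over $b \in [n]$, together with the fact that each marginal of $\mathcal{G}_t$ is a standard Gaussian vector, so that the hypercontractivity constants depend only on $k$ and $D$.
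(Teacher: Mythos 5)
Your interpolation proof is correct and takes a genuinely different route from the paper's. The paper explicitly constructs a matrix square root ${\bf M} = (1-\delta')((1+\beta){\bf I}-\beta{\bf J})$ with ${\bf M}^2 = \Sigma$, couples $\mathcal{G}$ to $\mathcal{H}$ by setting $\v{g} = {\bf M}\v{h}$ coordinate-wise, and then runs a hybrid argument that swaps $\v{g}_i \mapsto \v{h}_i$ one index at a time; each hybrid step is bounded via H\"older, Gaussian hypercontractivity, and the auxiliary robustness lemma \lref[Lemma]{lemma:gclose}, which controls $\E[(f(\v{g}_i)-f(\v{h}_i))^2]$ for a pair of highly correlated standard Gaussians. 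You instead interpolate the covariance linearly and differentiate: the Gaussian integration-by-parts identity replaces the explicit coupling (so you never have to solve for $\beta,\delta'$ or bound $\beta\leq\delta$), the hybrid argument, and \lref[Lemma]{lemma:gclose} all at once --- the role of that lemma is played in your proof by the derivative--Parseval identity $\sum_b\|\partial_b f\|_2^2=\sum_T|T|\hat{f}(T)^2\leq D$, which is morally the same ``low degree $\Rightarrow$ small gradient'' fact the paper's lemma exploits internally. The H\"older/hypercontractivity step, and the fact that every marginal of $\Sigma_t$ is a standard Gaussian, are common to both. Your route is more modular, at the price of invoking the interpolation derivative formula and having to note that $\Sigma_t=(1+t\delta){\bf I}-t\delta{\bf J}$ stays positive definite along the path (it does, since the smallest eigenvalue $1-(k-1)t\delta$ stays positive for the $\delta$ in play). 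One thing you should not leave implicit: the lemma commits to the explicit constant $(2k)^{2kD}$, and the final phrase ``absorbing polynomial prefactors into the exponent'' is doing real work; you should record that your argument yields $|F'(t)|\leq\tfrac{\delta}{2}\,D\,|S|(|S|-1)\,(|S|-1)^{D-1+D(|S|-2)/2}$ and then verify that this is indeed bounded by $\delta(2k)^{2kD}$, rather than gesture at it.
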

Proofs of  Lemma ~\ref{lemma:lowdeg}
, \ref{lemma:invariance}, \ref{lemma:indep_gaussian} appear in \lref[Section]{section:proofs}. We now prove \lref[Lemma]{lemma:main} using the above three claims.

\vspace{10pt}
\noindent {\bf Proof of \lref[Lemma]{lemma:main}:}
Let $S=\{\ell_1, \ell_2, \ldots, \ell_t\}$. We are interested in getting an upper bound for the following expectation:
$$ \left|\E_{j\in[\numtest]} \left[ \E_{\cald_{k,\epsilon_j}^{\otimes n}}\left[ \prod_{\ell_i\in S}f(\v{x}_{\ell_i})\right]\right]\right| \leq \E_{j\in[\numtest]} \left[\left| \E_{\cald_{k,\epsilon_j}^{\otimes n}}\left[ \prod_{\ell_i\in S}f(\v{x}_{\ell_i})\right]\right|\right]. $$
Let us look at the inner expectation first. Let $\gamma_j = \frac{\err}{ks_j}$ and the sequence $d_{j,i}$ be from \lref[Lemma]{lemma:lowdeg}. We can upper bound the inner expectation as follows:

\begin{align}
 \left| \E_{\cald_{k,\epsilon_j}^{\otimes n}}\left[ \prod_{{\ell_i}\in S}f(\v{x}_{\ell_i})\right] \right| &\leq  \left|\E_{\cald_{k,\epsilon_j}^{\otimes n}}\left[ \prod_{\ell_i\in S}(T_{1-\gamma_j}f)^{\leq {d_{j,i}}}(\v{x}_{\ell_i})\right]\right| + 2\cdot \err + k\sqrt{\sum_{s_j \leq |T|\leq S_j} \hat{f}(T)^2} \tag*{(by \lref[Lemma]{lemma:lowdeg})}\nonumber\\
(\mbox{by \lref[Lemma]{lemma:invariance}}) &\leq \left|\E_{(\v{g}_1, \v{g}_2,\ldots, \v{g}_k)\sim \nu_j^{\otimes n}}\left[ \prod_{\ell_i\in S}(T_{1-\gamma_j}f)^{\leq d_{j,i}}(\v{g}_i)\right] \right|+ \err_2  + 2\cdot \err + k\sqrt{\sum_{s_j \leq |T|\leq S_j} \hat{f}(T)^2},  \label{eq:gau} 
\end{align}
where $\err_2 = \tau^{\Omega_k(\gamma_j/\log(1/\beta_j))}$ and $\nu_j$ has the same covariance matrix as $\calD_{k, \epsilon_j}$. If we let $\delta_j = \frac{2\epsilon_j}{1-\alpha_j}$ then using \lref[Observation]{obs:dist prop}(\ref{prop:low pairwise correlation}), the covariance matrix is precisely $(1+\delta_j){\bf I} - \delta_j {\bf J}$ (note that we switched from $0/1$ to $-1/+1$ which changes the covaraince by a factor of $4$). Each of the functions $(T_{1-\gamma_j}f)^{\leq d_{j,i}}$ has $\ell_2$ norm upper bounded by $1$ and degree at most $d_{j,t}$. We can now apply \lref[Lemma]{lemma:indep_gaussian} to conclude that
\begin{align}
\left|\E_{(\v{g}_1, \v{g}_2,\ldots, \v{g}_k)\sim \nu_j^{\otimes n}}\left[ \prod_{\ell_i\in S}(T_{1-\gamma_j}f)^{\leq d_{j,i}}(\v{g}_i)\right]\right| \leq \left|\E_{(\v{h}_1, \v{h}_2,\ldots, \v{h}_k)}\left[ \prod_{\ell_i\in S}(T_{1-\gamma_j}f)^{\leq d_{j,i}}(\v{h}_i)\right] \right|+ \delta_j  \cdot (2k)^{2k d_{j,t}},\label{eq:gau ind}
\end{align}
where $\v{h}_i$'s are independent and each $\v{h}_i$ is distributed according to $\mathcal{N}(0,1)^n$. Thus,
\begin{align}
\E_{(\v{h}_1, \v{h}_2,\ldots, \v{h}_k)}\left[ \prod_{\ell_i\in S}(T_{1-\gamma_j}f)^{\leq d_{j,i}}(\v{h}_i)\right] &= \prod_{\ell_i\in S} \E_{\v{h}_i}\left[(T_{1-\gamma_j}f)^{\leq d_{j,i}}(\v{h}_i)\right]\nonumber\\
& = \left(\widehat{(T_{1-\gamma_j}f)^{\leq d_{j,i}}}(\emptyset)\right)^t = (\hat{f}(\emptyset))^t = 0,\label{eq:ind zero}
\end{align}
where we used the fact that $f$ is a folded function in the last step. Combining (\ref{eq:gau}), (\ref{eq:gau ind}) and (\ref{eq:ind zero}), we get 
\begin{align}
\left|\E_{\cald_{k,\epsilon_j}^{\otimes n}}\left[ \prod_{{\ell_i}\in S}f(\v{x}_{\ell_i})\right] \right|&\leq  \left(\delta_j  \cdot (2k)^{2k d_{j,t}}\right) + \left(\tau^{\Omega_k(\gamma_j/\log(1/\beta_j))}\right) + 2\cdot \err + k\sqrt{\sum_{s_j \leq |T|\leq S_j} \hat{f}(T)^2}\label{eq:with err terms}
\end{align}
We now upper bound the first term. For this, we use a very generous  upper bounds $d_{j,1} \leq \frac{k^5}{\err^3}\frac{1}{\epsilon_{j-1}^2}$ and $\delta_j \leq 4\epsilon_j$.
\begin{align*}
\delta_j  \cdot (2k)^{2k d_{j,t}}  &\leq   \left({ 4\epsilon_j} \cdot (2k)^{2{\mathbf d_{j,k}} k}\right)\\
 &\leq \epsilon_j \cdot 2^{\left( \frac{k^{10}}{\err^3\epsilon_{j-1}}\right)^k}\\
 &\leq \err.\tag*{{\Bigg (} using $\epsilon_{j} = \err \cdot 2^{- \left( \frac{k^{10}}{\err^3\epsilon_{j-1}}\right)^k}${\Bigg )}}
 \end{align*}
The second term in (\ref{eq:with err terms}) can also be upper bounded by $\err$ by choosing small enough $\tau$.
$$\max_{j}\{\left(\tau^{\Omega_k(\gamma_j/\log(1/\beta_j))}\right) \} \leq \left(\tau^{\Omega_k(\gamma_r/\log(1/\beta_r))}\right) \leq \err.$$
 Finally, taking the outer expectation of (\ref{eq:with err terms}), we get 
 \begin{align*}
  \E_{j\in[\numtest]} \left[ \left| \E_{\cald_{k,\epsilon_j}^{\otimes n}}\left[ \prod_{\ell_i\in S}f(\v{x}_{\ell_i})\right]\right|\right]\leq 4\cdot \err + k\E_{j\in \numtest}\left[\sqrt{\sum_{s_j \leq |T|\leq S_j} \hat{f}(T)^2} \right].
 \end{align*}
  Using Cauchy-Schwartz inequality,
$$\E_{j\in[\numtest]} \left[ \sqrt{\sum_{s_j < |T|< S_j} \hat{f}(T)^2}\right]  \leq \sqrt{\E_{j\in[\numtest]} \left[ \sum_{s_j < |T|< S_j} \hat{f}(T)^2\right] } \leq \frac{1}{\sqrt{\numtest}},$$
where the last inequality uses the fact that the intervals $(s_j , S_j)$ are disjoint for $j\in [\numtest]$ and $\|f\|_2^2 = \sum_T \hat{f}(T)^2 \leq 1$. The final bound we get is 
\begin{align*}
\left|  \E_{j\in[\numtest]} \left[ \E_{\cald_{k,\epsilon_j}^{\otimes n}}\left[ \prod_{\ell_i\in S}f(\v{x}_{\ell_i})\right]\right]\right| \leq  \E_{j\in[\numtest]} \left[ \left|\E_{\cald_{k,\epsilon_j}^{\otimes n}}\left[ \prod_{\ell_i\in S}f(\v{x}_{\ell_i})\right]\right|\right]\leq 4\cdot \err + \frac{k}{\sqrt{\numtest}} \leq 5.\err \leq \frac{\epsilon}{2^k},
 \end{align*}
as required. 
\qed

\section{Proofs of  Lemma ~\ref{lemma:lowdeg}
, \ref{lemma:invariance} \& \ref{lemma:indep_gaussian}}
\label{section:proofs}
In this section, we provide proofs of three crucial lemmas which we used in proving the soundness analysis of our dictatorship test.

\subsection{Moving to a low degree function}
The following lemma, at a very high level, says that if change $f$ to its low degree {\em noisy version} then the loss we incur in the expected quantity is small. 
\begin{lemma}[Restatement of {\lref[Lemma]{lemma:lowdeg}}]
For any $j\in [\numtest]$ and for any $S\subseteq[k]$, $|S|\geq 2$ such that $S=\{\ell_1, \ell_2,\ldots,\ell_t\}$,

$$ \left|  \E_{\cald_{k,\epsilon_j}^{\otimes n}}\left[ \prod_{\ell_i\in S}f(\v{x}_{\ell_i})\right] -  \E_{\cald_{k,\epsilon_j}^{\otimes n}}\left[ \prod_{\ell_i\in S}(T_{1-\gamma_j}f)^{\leq {d_{j,i}}}(\v{x}_{\ell_i})\right] \right| \leq 2\cdot \err + k\sqrt{\sum_{s_j \leq |T|\leq S_j} \hat{f}(T)^2}.$$
where $\gamma_j = \frac{\err}{ks_j}$ and $d_{j,i}$ is a sequence given by  $d_{j,1} =  \frac{2k^2\cdot s_j}{\err}\log\left(\frac{k}{\err}\right)$ and $d_{j,i}= (d_{j,1})^i$ for $1<i\leq t$. 
\end{lemma}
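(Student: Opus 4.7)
The plan is to use a hybrid argument, replacing $f(\v x_{\ell_i})$ by $(T_{1-\gamma_j}f)^{\leq d_{j,i}}(\v x_{\ell_i})$ one function at a time over $t = |S| \leq k$ steps. A single hybrid step contributes an error of the form $|\E[G(\v x_{-\ell_i}) \cdot h_i(\v x_{\ell_i})]|$, where $h_i := f - (T_{1-\gamma_j}f)^{\leq d_{j,i}}$ and $G$ is the product of the remaining $t-1$ factors (each of which is either $f$ itself or an earlier truncated-noisy copy). It therefore suffices to bound this single-step error by $O(\err/k)$ plus $O\bigl(\sqrt{\text{middle mass}}\bigr)$.

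The approach is to decompose $h_i$ into three pieces based on Fourier degree, separated by the thresholds $s_j$ and $S_j = s_{j+1}$. The low piece ($|T| \leq s_j$) has Fourier coefficients $(1-(1-\gamma_j)^{|T|})\hat f(T)$, so by Bernoulli's inequality $1-(1-\gamma_j)^{|T|} \leq \gamma_j s_j = \err/k$; hence its $\ell_2$-norm is at most $\err/k$, and Cauchy--Schwarz against $G$ contributes $\err/k$ per step. The middle piece ($s_j < |T| \leq S_j$) has $\ell_2$-norm at most $\sqrt{\sum_{s_j \leq |T| \leq S_j} \hat f(T)^2}$, which directly yields the middle-mass term of the statement.

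The main obstacle is the high piece ($|T| > S_j$), where naive $\ell_2$ bounds give only $\|h_i^{>S_j}\|_2 \leq 1$ and Cauchy--Schwarz produces a trivial estimate. Here the plan is to exploit Observation~\ref{obs:dist prop}(\ref{prop:corr bound}): the correlation between coordinate $\ell_i$ and its complement under $\calD_{k,\epsilon_j}$ is at most $\rho_j := 1 - \epsilon_j^2/(2(1-\alpha_j)^2)$. Let $U$ be the Markov operator mapping functions of $\v x_{-\ell_i}$ to functions of $\v x_{\ell_i}$. By Proposition~\ref{prop:mossel_prop212}, each Efron--Stein component contracts as $\|U(G_\beta)\|_2 \leq \rho_j^{|\beta|} \|G_\beta\|_2$. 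Since $h_i^{>S_j}$ carries Efron--Stein mass only on $|\beta| > S_j$, expanding $\E[G \cdot h_i^{>S_j}(\v x_{\ell_i})]$ in Efron--Stein components and applying Cauchy--Schwarz term-by-term yields a bound of $\rho_j^{S_j} \|G\|_2 \|h_i\|_2$; with $S_j = \log(k/\err)/\epsilon_j^2$, this is at most $\err/k$ up to constants. Controlling $\|G\|_2$ uniformly across the hybrid is where the fast-growing truncation degrees $d_{j,i} = (d_{j,1})^i$ enter, via hypercontractivity (Proposition~\ref{prop:hyper}) applied to the already-replaced low-degree factors.

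Summing the three contributions over the $t \leq k$ hybrid steps produces the claimed bound $2\err + k\sqrt{\sum_{s_j \leq |T| \leq S_j}\hat f(T)^2}$. The most delicate step is the high-degree tail: it cannot be handled by naive $\ell_2$-estimates on $h_i$ and instead requires exploiting the correlation structure of $\calD_{k,\epsilon_j}$ together with Mossel's contraction estimate, and the escalating degrees $d_{j,i}$ are precisely what allows the mixed products appearing at later hybrid steps to still be controlled by hypercontractivity.
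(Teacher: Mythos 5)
You have the right ingredients in view (the three-interval split, the Efron--Stein decomposition with Mossel's contraction estimate, hypercontractivity for the truncated factors), but the single merged hybrid does not give the stated bound, and the gap is structural rather than cosmetic. The problem is the spectator product $G$: once earlier indices have been replaced, $G$ contains factors of the form $(T_{1-\gamma_j}f)^{\leq d_{j,i'}}$, which are \emph{not} $L^\infty$-bounded. After Cauchy--Schwarz (or H\"older against the other factors), every one of your three per-step estimates picks up a factor that hypercontractivity can only bound by something of order $(2k)^{k d_{j,i-1}}$, and this grows without bound with $j$ (through $s_j$). The smallness you propose for the low piece ($\gamma_j s_j = \err/k$) and the high piece ($\rho_j^{S_j} \approx \err/k$) are both fixed constants independent of $j$ and cannot cancel that blowup, and the middle piece would then contribute $(2k)^{k d_{j,i-1}}\sqrt{\sum_{s_j\leq |T|\leq S_j}\hat f(T)^2}$ rather than the claimed $k\sqrt{\sum_{s_j\leq |T|\leq S_j}\hat f(T)^2}$. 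The high piece is the most hopeless: since $S_j > d_{j,i}$, the Fourier coefficients of $h_i$ at $|T| > S_j$ are simply $\hat f(T)$ with \emph{no} noise damping at all, so the exponential decay $(1-\gamma_j)^{d_{j,a}}$ that the paper later exploits to beat hypercontractivity is absent there, and the only decay available, $\rho_j^{S_j}$, is nowhere near enough.

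The paper sidesteps this by running two separate hybrid chains, $\Gamma_1$ (from $f$ to $T_{1-\gamma_j}f$) and $\Gamma_2$ (from $T_{1-\gamma_j}f$ to $(T_{1-\gamma_j}f)^{\leq d_{j,i}}$), each protected in a way matched to its obstacle. In $\Gamma_1$ every spectator factor is $f$ or $T_{1-\gamma_j}f$, both $[-1,1]$-valued, so $\|F\|_2 \leq 1$ with no hypercontractivity needed; the correlation decay $\rho_j^{|T|}$ together with $1-(1-\gamma_j)^{|T|}$ then cleanly controls the low and high degree regimes, leaving exactly the middle-mass term. In $\Gamma_2$ the spectators do include truncations, giving the hypercontractive blowup $(2k)^{k d_{j,a-1}}$; but now the factored-out piece is $(T_{1-\gamma_j}f)^{>d_{j,a}}$, whose $L^2$ norm decays like $(1-\gamma_j)^{d_{j,a}}$, and the geometric escalation $d_{j,a} = d_{j,1}\cdot d_{j,a-1}$ is tuned precisely so this decay overwhelms the blowup. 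Such exponential-in-$d_{j,a}$ decay cannot be extracted from the correlation bound alone, which is why merging the two chains, as you propose, fails.
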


\begin{proof}
The proof is presented in two parts. We first prove an upper bound on
\begin{equation}
\label{eq9}
 \Gamma_1 := \left|  \E_{\cald_{k,\epsilon_j}^{\otimes n}}\left[ \prod_{\ell_i\in S}f(\v{x}_{\ell_i})\right] -  \E_{\cald_{k,\epsilon_j}^{\otimes n}}\left[ \prod_{\ell_i\in S}(T_{1-\gamma_j}f)(\v{x}_{\ell_i})\right] \right| \leq \err + k\sqrt{\sum_{s_j \leq |T|\leq S_j} \hat{f}(T)^2}
 \end{equation}

and then an upper bound on 
\begin{equation}
\label{eq10}
 \Gamma_2:= \left| \E_{\cald_{k,\epsilon_j}^{\otimes n}}\left[ \prod_{\ell_i\in S}(T_{1-\gamma_j}f)(\v{x}_{\ell_i})\right] - \E_{\cald_{k,\epsilon_j}^{\otimes n}}\left[ \prod_{\ell_i\in S}(T_{1-\gamma_j}f)^{\leq {d_{j,i}}}(\v{x}_{\ell_i})\right] \right| \leq \err. 
 \end{equation}
Note that both these upper bounds are enough to prove the lemma.\\

\noindent{\bf Upper Bounding $\Gamma_1$}: The following analysis is very similar to the one in \cite{TY15}, we reproduce it here for the sake of completeness. The first upper bound is obtained by getting the upper bound for the following, for every $a\in [t]$.
\begin{equation}
\label{eq:Gamma_1a}
\Gamma_{1,a}:= \left|  \E_{\cald_{k,\epsilon_j}^{\otimes n}}\left[ \prod_{i\geq a}f(\v{x}_{\ell_i}) \prod_{i<a} (T_{1-\gamma_j}f)(\v{x}_{\ell_i})\right] -  \E_{\cald_{k,\epsilon_j}^{\otimes n}}\left[\prod_{i>a}f(\v{x}_{\ell_i}) \prod_{i\leq a} (T_{1-\gamma_j}f)(\v{x}_{\ell_i}) \right] \right|
\end{equation}
Note that by triangle inequality, $\Gamma_1 \leq \sum_{a\in[t]} \Gamma_{1,a}$.

\begin{align}
(\ref{eq:Gamma_1a})&= \left|\E_{\cald_{k,\epsilon_j}^{\otimes n}}\left[\left(f(\v{x}_{\ell_a}) - T_{1-\gamma_j}f(\v{x}_{\ell_a})\right) \prod_{i>a}f(\v{x}_{\ell_i})\prod_{i<a} (T_{1-\gamma_j}f)(\v{x}_{\ell_i}) \right] \right| \nonumber\\
&= \left|\E_{\cald_{k,\epsilon_j}^{\otimes n}}\left[\left(id - T_{1-\gamma_j}\right)f(\v{x}_{\ell_a}) \prod_{i>a}f(\v{x}_{\ell_i})\prod_{i<a} (T_{1-\gamma_j}f)(\v{x}_{\ell_i}) \right] \right| \nonumber\\
&= \left|\E_{\cald_{k,\epsilon_j}^{\otimes n}}\left[U\left((id - T_{1-\gamma_j}\right)f) (\v{x}_{\{\ell_i:i\in [t]\setminus\{a\}\}}) \prod_{i > a}f(\v{x}_{\ell_i})\prod_{i<a} (T_{1-\gamma_j}f)(\v{x}_{\ell_i}) \right] \right| \label{eq1}
\end{align}
where $U$ is the Markov operator for the correlated probability space which maps functions from the space $\X^{(\ell_a)}$ to the space $\prod_{i\in [t]\setminus \{a\} }\X^{(\ell_i)}$. We can look at the above expression as a product of two  functions, $F =  \prod_{i>a}f \prod_{i<a} (T_{1-\gamma_j}f)$ and $G = U(id - T_{1-\gamma_j})f)$.  From \lref[Observation]{obs:dist prop}(~\ref{prop:corr bound}), the correlation between spaces $\left(\X^{(\ell_a)}, \prod_{i\in [t]\setminus \{a\} }\X^{(\ell_i)}\right)$ is upper bounded by $1-\left(\frac{\epsilon_j}{1-\alpha_j}\right)^2 \leq 1-\epsilon_j^2 =:\rho_j$. Taking the Efron-Stein decomposition with respect to the product distribution, we have the following because of orthogonality of the Efron-Stein decomposition,
\begin{align}
(\ref{eq1}) = \left|\E_{\cald_{k,\epsilon_j}^{\otimes n}}\left[G \times F \right] \right| &= \left|\sum_{T\subseteq [n]}\E_{\cald_{k,\epsilon_j}^{\otimes n}}\left[G_T \times F_T \right] \right| \nonumber\\
\mbox{ (by Cauchy-Schwartz) } &\leq \sqrt{ \sum_{T\subseteq[n]} \|F_T\|_2^2 }\sqrt{ \sum_{T\subseteq[n]} \|G_T\|_2^2 }\label{eq2}
\end{align}
where the norms are with respect to $\cald_{k,\epsilon_j}^{\otimes n}$'s marginal distribution on the product distribution $\prod_{i \in [t]\setminus \{a\} }\X^{(\ell_i)}$.  By orthogonality, the quantity $\sqrt{ \sum_{T\subseteq[n]} \|F_T\|_2^2 }$ is just $\|F\|_2$. As $F$ is product of function whose range is $[-1,+1]$, rane of $F$ is also $[-1,+1]$ and hence $\|F\|_2$ is at most 1. Therefore, 
\begin{align}
\label{eq3}
&(\ref{eq2}) \leq \sqrt{ \sum_{T\subseteq[n]} \|G_T\|_2^2 }
\end{align}

We have $G_T = (UG')_T$, where $G' = (id - T_{1-\gamma_j})f$. In $G'_T$, the Efron-Stein decomposition is with respect to the marginal distribution of $\cald_{k,\epsilon_j}^{\otimes n}$ on $\X^{(\ell_a)}$, which is just uniform (by \lref[Observation]{obs:dist prop}(\ref{prop:uniform marginal})).  Using \lref[Proposition]{prop:mossel_prop211}, we have $G_T$ = $UG'_T = U(id - T_{1-\gamma_j})f_T$. Substituting in (\ref{eq3}), we get
\begin{equation}
\label{eq4}
(\ref{eq3})= \sqrt{ \sum_{T\subseteq[n]} \|U(if - T_{1-\gamma_j})f_T)\|_2^2 }
\end{equation}

We also have that the correlation is upper bounded by $\rho_j$. We can therefore apply \lref[Proposition]{prop:mossel_prop212}, and conclude that for each $T\subseteq[n]$, 
\[\|U(id - T_{1-\gamma_j})f_T\|_2 \leq \rho_j^{|T|}\|(id - T_{1-\gamma_j})f_T\|_2\]
where the norm on the right is with respect to the uniform distribution. Observe that
\[\|(id- T_{1-\gamma_j})f_T\|_2^2 = (1-(1-\gamma_j)^{|T|})^2\hat{f}(T)^2\]
Substituting back into (\ref{eq4}), we get
\begin{equation}
\label{eq5}
(\ref{eq4})\leq \sqrt{ \sum_{T\subseteq[n]} \underbrace{\rho_j^{2|T|} (1-(1-\gamma_j)^{|T|})^2\hat{f}(T)^2}_{\term(\epsilon_j, \gamma_j, T)} }
\end{equation}
We will now break the above summation into three different parts and bound each part separately.
\begin{align*} 
\Theta_1 &:= \sum_{\substack{T\subseteq [n] ,\\ |T|\leq s_j}} \term(\epsilon_j, \gamma_j, T)  & \Theta_2:=\sum_{\substack{T\subseteq [n] ,\\ s_j<|T| <  S_j}} \term(\epsilon_j, \gamma_j, T)\\
\Theta_3 &:= \sum_{\substack{T\subseteq [n] ,\\ |T|\geq S_j}} \term(\epsilon_j, \gamma_j, T) 
\end{align*}
\begin{itemize}
\item {\bf Upper bounding $\Theta_1$:} 
\begin{align*}
\Theta_1 &= \sum_{\substack{T\subseteq [n] ,\\ |T|\leq s_j}} \term(\epsilon_j, \gamma_j, T)  =\sum_{\substack{T\subseteq [n] ,\\ |T|\leq s_j}}  \rho_j^{2|T|}(1-(1-\gamma_j)^{|T|})^2 \hat{f}(T)^2\leq \sum_{\substack{T\subseteq [n] ,\\ |T|\leq s_j}} (1-(1-\gamma_j)^{|T|})^2 \hat{f}(T)^2.
\end{align*}
For every $|T|\leq s_j$ we have $1-(1-\gamma_j)^{|T|} \leq \err_1/k$. Thus,
$$\Theta_1 \leq \left(\frac{\err_1}{k}\right)^2\sum_{\substack{T\subseteq [n] ,\\ |T|\leq s_j}} \hat{f}(T)^2.$$
\item {\bf Upper bounding $\Theta_3$:} 
\begin{align*}
\Theta_3 &= \sum_{\substack{T\subseteq [n] ,\\ |T|\geq S_j}} \term(\epsilon_j, \gamma_j, T)  =\sum_{\substack{T\subseteq [n] ,\\ |T|\geq S_j}}  \rho_j^{2|T|}(1-(1-\gamma_j)^{|T|})^2 \hat{f}(T)^2\leq \sum_{\substack{T\subseteq [n] ,\\ |T| \geq S_j}} \rho_j^{2|T|} \hat{f}(T)^2.
\end{align*}
For every $|T| \geq S_j$ we have $\rho_j^{|T|}\leq (1-\epsilon_j^2)^{|T|} \leq \err_1/k$. Thus,
$$\Theta_3 \leq \left(\frac{\err_1}{k}\right)^2\sum_{\substack{T\subseteq [n] ,\\ |T| \geq S_j}} \hat{f}(T)^2.$$
\end{itemize}
Substituting these upper bounds in (\ref{eq5}), 
\begin{align*}
\Gamma_{1,a}& \leq \sqrt{ \left(\frac{\err_1}{k}\right)^2 \sum_{\substack{T\subseteq [n] ,\\ |T|\leq s_j or |T|\geq S_j}}  \hat{f}(T)^2 + 
\sum_{\substack{T\subseteq [n] ,\\ s_j < |T|< S_j}}  \hat{f}(T)^2 } \\
& \leq  \sqrt{\left(\frac{\err_1}{k}\right)^2 + \sum_{ s_j < |T|< S_j}  \hat{f}(T)^2} \tag*{(since $\sum_{T} \hat{f}(T)^2 \leq 1$)}\\
& \leq  \frac{\err_1}{k} + \sqrt{\sum_{ s_j < |T|< S_j}  \hat{f}(T)^2}. \tag*{(using concavity)}
\end{align*}
The required upper bound on $\Gamma_1$ follows by using $\Gamma_1 \leq \sum_{a\in[t]} \Gamma_{1,a}$ and the above bound.\\

\noindent{\bf Upper Bounding $\Gamma_2$}: We will now show an upper bound on $\Gamma_2$.  The approach is similar to the previous case, we upper bound the following quantity for every $a\in [t]$ 

{\small $$\Gamma_{2,a}:=\left|  \E_{\cald_{k,\epsilon_j}^{\otimes n}}\left[ \prod_{i\geq a}(T_{1-\gamma_j}f)(\v{x}_{\ell_i}) \prod_{i< a} (T_{1-\gamma_j}f^{\leq d_{j,i}})(\v{x}_{\ell_i})\right] -  \E_{\cald_{k,\epsilon_j}^{\otimes n}}\left[\prod_{i > a}(T_{1-\gamma_j}f)(\v{x}_{\ell_i}) \prod_{i\leq a} (T_{1-\gamma_j}f^{\leq d_{j,i}})(\v{x}_{\ell_i}) \right] \right| \nonumber $$}
\begin{align}
&= \left|\E_{\cald_{k,\epsilon_j}^{\otimes n}}\left[\left(T_{1-\gamma_j}f(\v{x}_{\ell_a}) - T_{1-\gamma_j}f^{\leq d_{j,a}}(\v{x}_{\ell_a})\right) \prod_{i>a}T_{1-\gamma_j}f(\v{x}_{\ell_i})\prod_{i<a} (T_{1-\gamma_j}f^{\leq d_{j,i}})(\v{x}_{\ell_i}) \right] \right| \nonumber\\
&= \left|\E_{\cald_{k,\epsilon_j}^{\otimes n}}\left[\left(T_{1-\gamma_j}f^{> d_{j,a}}(\v{x}_{\ell_a}) \right) \prod_{i>a}T_{1-\gamma_j}f(\v{x}_{\ell_i})\prod_{i<a} (T_{1-\gamma_j}f^{\leq d_{j,i}})(\v{x}_{\ell_i}) \right] \right|  \label{eq6}
\end{align}
By using Holder's inequality we can upper bound (\ref{eq6}) as:
\begin{align}
(\ref{eq6}) \leq \|T_{1-\gamma_j}f^{> d_{j,a}}\|_2\prod_{i>a} \|T_{1-\gamma_j}f\|_{2(t-1)} \prod_{i<a} \|T_{1-\gamma_j}f^{\leq d_{j,i}}\|_{2(t-1)},\label{eq7}
\end{align}
where each norm is w.r.t the uniform distribution as marginal of each $\v{x}_{\ell_i}$ is uniform in $\{+1,-1\}^n$. Now, $\|T_{1-\gamma_j}f\|_{2(t-1)} \leq 1$ as the range if $T_{1-\gamma_j}f$ is in $[-1,+1]$. To upper bound $\|T_{1-\gamma_j}f^{\leq d_{j,i}}\|_{2(t-1)}$, we use \lref[Proposition]{prop:hyper} and using the fact that $\{-1,+1\}$ uniform random variable is $(2,q,1/\sqrt{q-1})$ hypercontractive (\lref[Theorem]{thm:hc param}) to get 
$$\|T_{1-\gamma_j}f^{\leq d_{j,i}}\|_{2(t-1)}\leq (2t-3)^{d_{j,i}}\|T_{1-\gamma_j}f^{\leq d_{j,i}}\|_2 \leq (2t)^{d_{j,i}}.$$
Plugging this in (\ref{eq7}), we get 
\begin{align}
(\ref{eq7}) &\leq \|T_{1-\gamma_j}f^{> d_{j,a}}\|_2 \prod_{i < a}(2t)^{d_{j,i}} \leq (1-\gamma_j)^{d_{j,a}}\cdot \prod_{i < a}(2t)^{d_{j,i}} \nonumber\\
&\leq e^{-\gamma_j d_{j,a}}\cdot (2k)^{k\cdot d_{j,a-1}}\nonumber\\
&\leq e^{-\frac{\err}{ks_j}\cdot d_{j,a}}\cdot (2k)^{k\cdot d_{j,a-1}} \label{eq8}
\end{align}
Now, 
\begin{align*}
d_{j,1}\cdot d_{j,a-1} &= d_{j,a}\\
 \frac{2k^2\cdot s_j}{\err}\log\left(\frac{k}{\err}\right)\cdot d_{j,a-1} & =  d_{j,a}\\
\frac{k^2\cdot s_j}{\err}\log\left(\frac{k}{\err}\right) + \frac{k^2\cdot s_j}{\err}\log\left(\frac{k}{\err}\right)\cdot d_{j,a-1} &\leq d_{j,a}\\
\frac{k\cdot s_j}{\err}\log\left(\frac{k}{\err}\right) + \frac{k^2\cdot s_j}{\err} \cdot\log(2k) \cdot d_{j,a-1} &\leq d_{j,a}\\
\frac{k\cdot s_j}{\err}\cdot \left(\log\left(\frac{k}{\err}\right) + k\cdot d_{j,a-1}\log(2k)\right) &= d_{j,a}\\
\frac{k\cdot s_j}{\err}\cdot \log\left(\frac{k}{\err}(2k)^{k\cdot d_{j,a-1}}\right) &= d_{j,a}
\end{align*}
This implies
\begin{align*}
 \log\left(\frac{k}{\err}(2k)^{k\cdot d_{j,a-1}}\right) &= \frac{\err}{ks_j}\cdot d_{j,a}\\
\implies\frac{k}{\err}(2k)^{k\cdot d_{j,a-1}} &= e^{\frac{\err}{ks_j}\cdot d_{j,a}}\\
\implies e^{-\frac{\err}{ks_j}\cdot d_{j,a}}\cdot (2k)^{k\cdot d_{j,a-1}} &= \frac{\err}{k}.
\end{align*}
Thus from (\ref{eq8}), we have $\Gamma_{2,a} \leq \frac{\err}{k}$. To conclude the proof, by triangle inequality we have $\Gamma_2 \leq \sum_{a\in [t]} \Gamma_{2,a} \leq \err$.
\end{proof}

\subsection{Moving to the Gaussian setting}

We are now in the setting of {\em low degree} polynomials because of \lref[Lemma]{lemma:lowdeg}. The following lemma let us switch from our test distribution to a Gaussian distribution with the same first two moments. 

\begin{lemma}[Restatement of {\lref[Lemma]{lemma:invariance}}]
Let $j\in [\numtest]$ and $\nu_j$ be a distribution on jointly distributed standard Gaussian variables with same covariance matrix as that of $\cald_{k,\epsilon_j}$. Then for any $S\subseteq[k]$, $|S|\geq 2$ such that $S=\{\ell_1, \ell_2,\ldots,\ell_t\}$,

$$ \left|  \E_{\cald_{k,\epsilon_j}^{\otimes n}}\left[ \prod_{\ell_i\in S}(T_{1-\gamma_j}f)^{\leq d_{j,i}}(\v{x}_{\ell_i})\right] - \E_{(\v{g}_1, \v{g}_2,\ldots, \v{g}_k)\sim \nu_j^{\otimes n}}\left[ \prod_{\ell_i\in S}(T_{1-\gamma_j}f)^{\leq d_{j,i}}(\v{g}_i)\right]   \right| \leq \err_2$$
where $d_{j,i}$ from \lref[Lemma]{lemma:lowdeg} and $\err_2 = \tau^{\Omega_k(\gamma_j/\log(1/\beta_j))}$ 
(Note: $\Omega(.)$ hides a constant depending on $k$).
\end{lemma}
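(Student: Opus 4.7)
The plan is to obtain this lemma as a direct application of the invariance principle (Theorem \ref{thm:invariance}) to the $t$ multilinear polynomials $g_{\ell_i} := (T_{1-\gamma_j}f)^{\leq d_{j,i}}$, with the $\pm 1$-valued distribution $\calD_{k,\epsilon_j}$ and the matching Gaussian distribution $\nu_j$. There are two things to verify before the theorem can be invoked: first, that $\calD_{k,\epsilon_j}$ satisfies the mean-zero and minimum-atom-probability hypotheses of the theorem, and second, that each $g_{\ell_i}$ meets the $(d,\tau)$-quasirandomness and high-degree variance conditions.

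First I would verify the conditions on the distribution. After the switch from $0/1$ to $\pm 1$, Observation \ref{obs:dist prop}(\ref{prop:uniform marginal}) implies $\E_{x \sim \calD_{k,\epsilon_j}}[x_i]=0$ for every $i$. From the explicit description of $\calD_{k,\epsilon_j}$ in Section \ref{subsection:distribution}, the smallest nonzero atom has mass $\beta_j = \tfrac{\epsilon_j}{1-(k-1)\epsilon_j}$. Thus one can apply Theorem \ref{thm:invariance} with parameters $\alpha = \beta_j$, target error $\err_2$, noise parameter $\gamma = \gamma_j$, and the number of functions equal to $t \le k$. This produces constants $d_0$ and $\tau_0$ (with the quantitative relation $d_0 = \log(1/\tau_0)/\log(1/\beta_j)$ and $\err_2 = \tau_0^{\Omega_k(\gamma_j/\log(1/\beta_j))}$) such that the theorem's conclusion holds provided the input polynomials are $(d_0,\tau_0)$-quasirandom and have the required tail variance bound.

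Next I would check that each $g_{\ell_i} = (T_{1-\gamma_j}f)^{\leq d_{j,i}}$ satisfies both conditions, drawing on the hypothesis of the soundness lemma that $f$ itself is $(d_0,\tau_0)$-quasirandom (we choose $\tau$ in Lemma \ref{lemma:soundness} small enough to force this, noting that $d_{j,i} \ge d_{j,1} \ge d_0$ by the choice of parameters, so truncation at degree $d_{j,i}$ does not kill any Fourier weight of degree $\le d_0$). By Fact \ref{fact:noise fd}, the Fourier coefficient of $g_{\ell_i}$ at $T$ equals $(1-\gamma_j)^{|T|}\hat f(T)$ for $|T|\le d_{j,i}$ and $0$ otherwise, so for every coordinate $i$,
\[
\sum_{\substack{i\in T,\, |T|\le d_0}} \widehat{g_{\ell_i}}(T)^2 \;\le\; \sum_{\substack{i\in T,\, |T|\le d_0}} \hat f(T)^2 \;\le\; \tau_0,
\]
verifying $(d_0,\tau_0)$-quasirandomness. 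For the tail variance, since only coefficients of degree at most $d_{j,i}$ survive truncation and each is damped by $(1-\gamma_j)^{|T|}$,
\[
\mathrm{Var}\bigl[g_{\ell_i}^{>d_0}\bigr] \;=\; \sum_{d_0 < |T| \le d_{j,i}} (1-\gamma_j)^{2|T|}\hat f(T)^2 \;\le\; (1-\gamma_j)^{2d_0}\|f\|_2^2 \;\le\; (1-\gamma_j)^{2d_0},
\]
which is exactly the variance hypothesis of Theorem \ref{thm:invariance} with $\gamma = \gamma_j$. Plugging the $g_{\ell_i}$ into the invariance theorem then yields the claimed bound of $\err_2 = \tau^{\Omega_k(\gamma_j/\log(1/\beta_j))}$.

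The only real subtlety, which I would address explicitly at the outset, is the consistency of parameters: one must ensure $d_{j,i}\ge d_0$ and that $\tau$ from Lemma \ref{lemma:soundness} is small enough that the assumed $\tau$-quasirandomness of $f$ implies $\tau_0$-quasirandomness at the coarser level $d_0$. Both are arranged by the choice $d = \log(1/\tau)/\log(1/\beta_\numtest)$ and $\tau^{\Omega_k(\err/(10 s_\numtest \log(1/\beta_\numtest)))}\le \err$ stated in Lemma \ref{lemma:soundness}; no further combinatorial obstacle arises since the rest is purely Theorem \ref{thm:invariance} applied verbatim.
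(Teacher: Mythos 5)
Your proposal is correct and follows essentially the same route as the paper: verify that the (noised, truncated) polynomials satisfy the $(d,\tau)$-quasirandomness and tail-variance hypotheses of Theorem~\ref{thm:invariance}, verify the mean-zero and minimum-atom conditions on $\calD_{k,\epsilon_j}$, and then cite the invariance principle directly. You are in fact a bit more careful than the paper's own two-line proof, which checks the hypotheses for $T_{1-\gamma_j}f$ and leaves the truncation $(\cdot)^{\le d_{j,i}}$ implicit, whereas you track it explicitly and flag the parameter-consistency requirement $d_{j,i}\ge d_0$.
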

\begin{proof}
Using the definition of $(d, \tau)$-quasirandom function and \lref[Fact]{fact:noise fd}, if $f$ is $(d, \tau)$- quasirandom then so is $T_{1-\gamma}f$ for any $0\leq \gamma \leq 1$.  Also, $T_{1-\gamma}f$ satisfies 
$$\Var[T_{1-\gamma}f^{>d}] = \sum_{\substack{T\subseteq [n] \\ |T|> d}} (1-\gamma)^{2|T|}\hat{f}(T)^2 \leq (1-\gamma)^{2d} \cdot \sum_{\substack{T\subseteq [n] \\ |T|> d}} \hat{f}(T)^2 \leq  (1-\gamma)^{2d}.$$
The lemma  follows from a direct application of \lref[Theorem]{thm:invariance}.
\end{proof}

\subsection{ Making Gaussian variables independent}

Our final lemma allows us to make the Gaussian variables independent. Here we crucially need the property that the polynomials we are dealing with are low degree polynomials. Before proving \lref[Lemma]{lemma:indep_gaussian}, we need the following lemma which says that low degree functions are robust to small perturbations in the input on average.  
\begin{lemma}
\label{lemma:gclose}
Let $f: \R^n \rightarrow \R$ be a multilinear polynomial of degree $d$ such that $\|f\|_2\leq 1$ suppose $\v x, \v z \sim \calN(0,1)^n$ be $n$-dimensional standard gaussian vectors such that $\E[x_i z_i]\geq 1- \delta$ for all $i\in [n]$. Then 
$$ \E[(f(\v x ) -f(\v z))^2] \leq 2 \delta d .$$
\end{lemma}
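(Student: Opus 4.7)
}
My plan is to expand the quantity $\E[(f(\v x)-f(\v z))^2]$ using the multilinear representation of $f$, reduce the computation to the pairwise correlations $\rho_i := \E[x_i z_i]$, and then apply Bernoulli's inequality coordinatewise. Since $f$ is multilinear of degree at most $d$, we can write
\[
f(\v x) \;=\; \sum_{S\subseteq[n],\,|S|\leq d} \hat{f}(S) \prod_{i\in S} x_i.
\]
Under the standard Gaussian product measure, the monomials $\{\prod_{i\in S} x_i\}_S$ are orthonormal (each $x_i$ has mean $0$, variance $1$, and the coordinates are independent), so $\|f\|_2^2 = \sum_S \hat f(S)^2 \leq 1$.

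The key observation is that although $\v x$ and $\v z$ are correlated, the joint distribution is still a product across coordinates $i\in[n]$. Hence, for any $S,T\subseteq[n]$,
\[
\E\!\left[\prod_{i\in S}x_i \prod_{j\in T}z_j\right] \;=\; \prod_{i\in S\cap T}\E[x_iz_i]\;\prod_{i\in S\setminus T}\E[x_i]\;\prod_{j\in T\setminus S}\E[z_j].
\]
The last two products vanish unless $S=T$, in which case the expression equals $\prod_{i\in S}\rho_i$. Therefore
\[
\E[f(\v x)f(\v z)] \;=\; \sum_{S}\hat f(S)^2 \prod_{i\in S}\rho_i \;\geq\; \sum_S \hat f(S)^2 (1-\delta)^{|S|},
\]
where we used $\rho_i\geq 1-\delta$ together with the fact that $\rho_i\leq 1$ by Cauchy--Schwarz (so each factor in the product is nonnegative and can be lower bounded termwise).

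Putting this together with $\E[f(\v x)^2]=\E[f(\v z)^2]=\|f\|_2^2$, I will get
\[
\E[(f(\v x)-f(\v z))^2] \;=\; 2\|f\|_2^2 - 2\,\E[f(\v x)f(\v z)] \;\leq\; 2\sum_S \hat f(S)^2\bigl(1-(1-\delta)^{|S|}\bigr).
\]
Bernoulli's inequality gives $1-(1-\delta)^{|S|} \leq |S|\,\delta \leq d\,\delta$ for every $S$ appearing in the sum, so the right-hand side is at most $2d\delta\sum_S \hat f(S)^2 \leq 2\delta d$, as required. There is no real obstacle: the only subtlety is to verify orthogonality of the multilinear monomials under the \emph{correlated} Gaussian product measure, and this follows immediately from the coordinatewise independence of $(x_i,z_i)$ across $i$.
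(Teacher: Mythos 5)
Your proof is correct and follows essentially the same route as the paper: expand $f$ in the multilinear basis, use coordinatewise independence to reduce $\E[f(\v x)f(\v z)]$ to $\sum_S \hat f(S)^2\prod_{i\in S}\E[x_iz_i]$, lower bound this by $\sum_S\hat f(S)^2(1-\delta)^{|S|}$, and finish with Bernoulli's inequality $1-(1-\delta)^{|S|}\le\delta|S|\le\delta d$. The only cosmetic difference is that you spell out the cross-term orthogonality argument more explicitly.
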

\begin{proof}
For $T\subseteq[n]$, we have
$$\E[\chi_T(\v x)\chi_T(\v z)] = \prod_{i\in T} \E[x_iz_i]\geq \prod_{i\in T}(1- \delta) \geq  (1-\delta)^{|T|}$$
We now bound the following expression,
\begin{align*}
\E[ ( f(\v x) - f(\v z) )^2] &= \E[f(\v x)^2 + f(\v z)^2 -2f(\v x)z(\v x)]\\
&= \sum_{T\subseteq[n], |T|\leq d} \hat{f}(T)^2 ( 2-2\E[\chi_T(\v x)\chi_T(\v z)])\\
& \leq 2 \cdot \sum_{T\subseteq[n], |T|\leq d} \hat{f}(T)^2 ( 1-(1-\delta)^{|T|})\\
&\leq 2 \cdot \sum_{T\subseteq[n], |T|\leq d} \hat{f}(T)^2 \delta |T| \\
&\leq 2 \delta d\cdot \sum_{T\subseteq[n], |T|\leq d} \hat{f}(T)^2 \leq 2\delta d, 
\end{align*}
where the last inequality uses $\|f\|_2 \leq 1$.
\end{proof}

We are now ready to prove \lref[Lemma]{lemma:indep_gaussian}.

\begin{lemma}[Restatement of {\lref[Lemma]{lemma:indep_gaussian}}]
Let $k\geq 2$ and $2\leq t\leq k$ and let $f: \R^n \rightarrow \R$ be a multilinear polynomial of degree $D\geq 1$ such that $\|f\|_2\leq 1$. If $\mathcal{G}$ be a joint distribution on $k$ standard gaussian random variable with covariance matrix $(1+\delta) {\bf I} - \delta {\bf J}$ and $\mathcal{H}$ be a distribution on $k$ independent standard gaussian then it holds that
$$ \left| \E_{\mathcal{G}^{\otimes n}}\left[ \prod_{i\in [t]}f(\v{g}_i)\right] - \E_{\mathcal{H}^{\otimes n}}\left[ \prod_{i\in [t]}f(\v{h}_i)\right]   \right| \leq \delta\cdot (2k)^{2Dk}.$$
\end{lemma}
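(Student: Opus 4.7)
My plan is to construct an explicit coupling of $\mathcal{G}^{\otimes n}$ and $\mathcal{H}^{\otimes n}$ in which each pair $(\v{g}_i,\v{h}_i)$ is highly correlated, apply \lref[Lemma]{lemma:gclose} to bound $\|f(\v{g}_i)-f(\v{h}_i)\|_2$, and combine over $t$ telescoping hybrid steps. The essential quantitative point is that the coupling must satisfy $1 - \E[(\v{g}_i)_j(\v{h}_i)_j] = O(\delta^2)$; since \lref[Lemma]{lemma:gclose} introduces a square root, this is what converts the coupling quality into a linear-in-$\delta$ final bound. A naive coupling producing only $O(\delta)$ correlation deficit would leave an uncorrectable $\sqrt{\delta}$ in the final estimate.

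Let $\v{h}_1,\ldots,\v{h}_t$ be i.i.d.\ $\mathcal{N}(0,I_n)$ and define
$$\v{g}_i \;:=\; \alpha\,\v{h}_i + \beta \sum_{i' \in [t]\setminus\{i\}} \v{h}_{i'},$$
where $(\alpha,\beta)\in\R^2$ is chosen so the covariance of $(\v{g}_1,\ldots,\v{g}_t)$ matches $(1+\delta)I_t - \delta J_t$: the two requirements $\alpha^2 + (t-1)\beta^2 = 1$ and $2\alpha\beta + (t-2)\beta^2 = -\delta$ subtract to $(\alpha-\beta)^2 = 1+\delta$, forcing
$$\alpha \;=\; \frac{(t-1)\sqrt{1+\delta} + \sqrt{1-(t-1)\delta}}{t}.$$
Different coordinates of the $\v{g}_i$ are built from independent $\v{h}_{i'}$'s, so $(\v{g}_1,\ldots,\v{g}_t)$ indeed has distribution $\mathcal{G}^{\otimes n}$, while $\E[(\v{g}_i)_j(\v{h}_i)_j] = \alpha$. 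Taylor-expanding the two square roots about $\delta=0$, the $O(\delta)$ terms cancel and $\alpha = 1 - (t-1)\delta^2/8 + O(\delta^3) \geq 1 - k\delta^2$ for $\delta$ small.

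Applied with correlation deficit $k\delta^2$, \lref[Lemma]{lemma:gclose} yields $\|f(\v{g}_i)-f(\v{h}_i)\|_2 \leq \sqrt{2kD}\,\delta$. I then telescope,
$$\E\Bigl[\prod_i f(\v{g}_i)\Bigr] - \E\Bigl[\prod_i f(\v{h}_i)\Bigr] = \sum_{m=1}^{t} \E\Bigl[(f(\v{g}_m)-f(\v{h}_m)) \prod_{i<m} f(\v{h}_i) \prod_{i>m} f(\v{g}_i)\Bigr],$$
bound each summand by Cauchy-Schwarz, and control the remaining product of $t-1$ factors by generalized H\"older: $\|\prod_{i\neq m} f(X_i)\|_2 \leq \prod_{i\neq m}\|f(X_i)\|_{2(t-1)}$, valid since each $X_i$ is marginally standard Gaussian. \lref[Proposition]{prop:hyper} together with \lref[Theorem]{thm:hc param} then gives $\|f\|_{2(t-1)} \leq (2t-3)^{D/2}\|f\|_2 \leq (2k)^{D/2}$, so the product is at most $(2k)^{Dk/2}$. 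Summing over $t\leq k$ hybrid steps yields an overall bound of order $k\sqrt{kD}\,\delta\,(2k)^{Dk/2}$, which is comfortably absorbed into $\delta(2k)^{2Dk}$ since the spare factor $(2k)^{3Dk/2}$ dominates any polynomial in $k$ and $D$.

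The only delicate step is the Taylor cancellation producing $1-\alpha = O(\delta^2)$, and it is the symmetric structure of the target covariance $(1+\delta)I-\delta J$ that generates it; everything else—the telescoping, Cauchy-Schwarz, H\"older, and hypercontractivity—is routine, and the final estimate is very loose.
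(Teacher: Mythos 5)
Your proof is correct and follows essentially the same route as the paper's. Both arguments construct an explicit symmetric linear map coupling the independent Gaussians $\v{h}_i$ to the correlated ones $\v{g}_i$, observe that the per-coordinate correlation deficit $1-\E[(\v{g}_i)_j(\v{h}_i)_j]$ is $O(\delta^2)$ rather than $O(\delta)$ — the crucial point you flag, which the square root in \lref[Lemma]{lemma:gclose} then converts to a linear-in-$\delta$ bound — and finish with a hybrid/telescoping argument combined with Cauchy–Schwarz, H\"older, and Gaussian hypercontractivity. The only (cosmetic) difference is the parametrization of the coupling matrix: the paper writes it as ${\bf M}=(1-\delta')((1+\beta){\bf I}-\beta{\bf J})$ over $k$ variables with ${\bf M}^2=\Sigma$, and deduces $\beta\leq\delta$ from the roots of the quadratic, whereas you work directly over the $t$ relevant variables with $(\alpha,\beta)$, solve the eigenvalue equations to get the closed form for $\alpha$, and extract $1-\alpha=O(\delta^2)$ by Taylor expansion. (Your derivation of $\alpha$ states only the subtraction $(\alpha-\beta)^2=1+\delta$; for completeness one should also note $\alpha+(t-1)\beta=\sqrt{1-(t-1)\delta}$ from the all-ones eigenvector before writing the closed form, but this is a one-line gap in exposition, not in substance.)
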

\begin{proof} 
Let $\v{\Sigma}=(1+\delta) {\bf I} - \delta {\bf J}$ be the covariance matrix. Let ${\bf M} = (1-\delta')((1+\beta) {\bf I} - \beta {\bf J})$ be a matrix such that $\bf M^2 = \Sigma$. There are multiple ${\bf M}$ which satisfy $\bf M^2 = \Sigma$. We chose the ${\bf M}$ stated above to make the analysis simpler. From the way we chose ${\bf M}$ and using the condition $\bf M^2 = \Sigma$, it is easy to observe that $\beta$ and $\delta'$ should satisfy the following two conditions:
$$1-\delta' = \frac{1}{\sqrt{1+(k-1)\beta^2}}\quad \text{and} \quad \frac{(k-2)\beta^2-2\beta}{1+(k-1)\beta^2} =-\delta.$$
Since $\mathcal{H}$ is a distribution of $k$ independent standard gaussians, we can generate a sample $x\sim \calG$ by sampling $y \sim \mathcal{H}$ and setting $x = {\bf M}y$. In what follows, we stick to the following notation: $(\v{h}_1, \v{h}_2, \ldots, \v{h}_k)\sim \calH^{\otimes n}$  and $(\v{g}_1, \v{g}_2, \ldots, \v{g}_k)_j = {\bf M} (\v{h}_1, \v{h}_2, \ldots, \v{h}_k)_j$ for each $j\in [n]$.

Because of the way we chose to generate $g_i's$, we have for all $i\in [k]$ and $j\in [n]$, $\E[(\v{g}_i)_j(\v{h}_i)_j] = 1-\delta'\geq 1- k\beta^2$. To get an upper bound on $\beta$, notice that $\beta$ is a root of the quadratic equation $(k + \delta k - \delta - 2) \beta^2 - 2\beta + \delta=0$.  Let $k'= (k + \delta k - \delta - 2)$, if $\beta_1, \beta_2$ are the roots of the equation then they satisfy: $k'\beta_1 + k'\beta_2 = 2$ and $(k'\beta_1)(k'\beta_2) = \delta k'$ and $\beta_1, \beta_2>0$. Thus, we have $\min\{ k'\beta_1, k'\beta_2\} \leq \delta k'$ and hence, we can take $\beta$ such that $\beta\leq \delta$.

 We wish to upper bound the following expression:

$$\Gamma:= \left| \E_{\mathcal{H}^{\otimes n}}\left[ \prod_{i\in [t]}f(\v{g}_i) - \prod_{i\in [t]}f(\v{h}_i)\right]   \right|.$$
Define the following quantity
$$\Gamma_i := \left| \E_{\mathcal{H}^{\otimes n}}\left[ \prod_{j=1}^{i-1} f(\v{h}_j)\prod_{j=i}^{t} f(\v{g}_j)  -  \prod_{j=1}^{i} f(\v{h}_j) \prod_{j=i+1}^{t} f(\v{g}_j) \right]   \right|.$$
By triangle inequality, we have $\Gamma \leq \sum_{i\in [t]}\Gamma_i$. We now proceed with upper bounding $\Gamma_i$ for a given $i\in[t]$.

\begin{align*}
\Gamma_i &= \left| \E_{\mathcal{H}^{\otimes n}}\left[ \prod_{j=1}^{i-1} f(\v{h}_j)\prod_{j=i}^{t} f(\v{g}_j) - \prod_{j=1}^{i} f(\v{h}_j) \prod_{j=i+1}^{t} f(\v{g}_j) \right]   \right|\\
&=\left| \E_{\mathcal{H}^{\otimes n}}\left[ (f(\v{g}_i) - f(\v{h}_i) )\cdot \prod_{j=1}^{i-1} f(\v{h}_j)\prod_{j=i+1}^{t} f(\v{g}_j) \right]  \right|\\
& \leq \sqrt{\E_{\mathcal{H}^{\otimes n}}[ (f(\v{g}_i) - f(\v{h}_i))^2]} \cdot  \prod_{j=1}^{i-1} \E_{\mathcal{H}^{\otimes n}}[f(\v{h}_j)^{2(t-1)}]^{\frac{1}{2(t-1)}} \prod_{j=i+1}^{t} \E_{\mathcal{H}^{\otimes n}}[ f(\v{g}_j) ^{2(t-1)}]^{\frac{1}{2(t-1)}},
\end{align*}
where the last step uses Holder's Inequality. Now, the marginal distribution on each $h_j$ and $g_j$ is identical which is $\mathcal{N}(0,1)^n$, we have
\begin{align*}
\Gamma_i  &\leq \sqrt{\E_{\mathcal{H}^{\otimes n}}[ (f(\v{g}_i) - f(\v{h}_i))^2]} \cdot  \prod_{j=1}^{i-1} \|f\|_{2(t-1)} \prod_{j=i+1}^{t} \|f\|_{2(t-1)}\\
& \leq \sqrt{\E_{\mathcal{H}^{\otimes n}}[ (f(\v{g}_i) - f(\v{h}_i))^2]} \cdot ( \|f\|_{2(t-1)})^{t-1}
\end{align*}
Since a standard one dimensional Gaussian is $(2, q, 1/\sqrt{q-1})$-hypercontractive (\lref[Theorem]{thm:hc param}), from \lref[Proposition]{prop:hyper} , $\|f\|_{2(t-1)} \leq (\sqrt{2t-3})^D\|f\|_2 \leq (\sqrt{2t-3})^D < (2t)^{D/2} $.
Thus,
$$\Gamma_i \leq  (2t)^{D(t-1)/2}  \cdot \sqrt{\E_{\mathcal{H}^{\otimes n}}[ (f(\v{g}_i) - f(\v{h}_i))^2]}$$
Now, each $\v{g}_i, \v{h}_i$ are such that such that $\E[(\v{g}_i)_j\cdot(\v{h}_i)_j] = 1-\delta'\geq 1-k\delta^2$ for every $j\in [n]$. We can apply \lref[Lemma]{lemma:gclose} to get $\E_{\mathcal{H}^{\otimes n}}[ (f(\v{g}_i) - f(\v{h}_i))^2] \leq 2k\delta^2 D$. Hence, we can safely upper bound $\Gamma_i$ as 
$$\Gamma_i \leq (2t)^{D(t-1)/2}  \cdot 2k\delta D.$$ 
Therefore, $\Gamma \leq \sum_i \Gamma_i \leq t\cdot (2t)^{D(t-1)/2}  \cdot 2k\delta D$ which is at most $2k^2\delta D \cdot (2k)^{Dk/2} \leq  \delta \cdot (2k)^{2Dk}$ as required.
\end{proof}

	{\small
		\bibliographystyle{alpha}
		\bibliography{covering-bib}
	}
\end{document}